\documentclass[11pt]{article}  

\usepackage{multicol}

\usepackage[all]{xy}
\usepackage[pdf]{pstricks}
\usepackage{pst-all}
\usepackage{pstricks-add}

\usepackage{graphicx}

\usepackage{caption}
\usepackage{subcaption}

\usepackage{amssymb}
\usepackage{latexsym}
\usepackage[english]{babel}
\usepackage{amsmath}
\usepackage{makeidx}
\usepackage[utf8]{inputenc}
\usepackage{verbatim}
\usepackage{cite}
\usepackage[pagewise]{lineno}
\usepackage{amsthm}
\usepackage{tikz-cd} 
\usepackage{stmaryrd} 
\usepackage[unicode = true]{hyperref}

\newtheorem{theorem}{Theorem}[section]

\newtheorem{lemma}[theorem]{Lemma}
\newtheorem{proposition}[theorem]{Proposition}

\theoremstyle{definition}
\newtheorem{definition}[theorem]{Definition}
\newtheorem{remark}[theorem]{Remark}
\newtheorem{example}{Example}

\newcommand{\R}{\ensuremath{\mathbb{R}}}

\newcommand{\D}{\ensuremath{\mathcal{D}}}

\newcommand{\F}{\ensuremath{\mathbb{F}}}

\newcommand{\dd}{d}
\newcommand{\Reeb}{R}

\DeclareMathOperator{\Id}{Id}

\begin{document}
	\title{The geometry of some thermodynamic systems}
	
	\author{
		{\bf\large Alexandre Anahory Simoes}\hspace{2mm}
		\vspace{1mm}\\
		{\it\small Instituto de Ciencias Matemáticas (CSIC-UAM-UC3M-UCM)}\\
		{\it\small Calle Nicolás Cabrera, 13-15, Campus Cantoblanco, UAM}, {\it\small 28049 Madrid, Spain}\\
		\vspace{2mm}\\
		{\bf\large David Mart{\'\i}n de Diego}\hspace{2mm}
		\vspace{1mm}\\
		{\it\small Instituto de Ciencias Matem\'aticas (CSIC-UAM-UC3M-UCM) }\\
		{\it\small Calle Nicol\'as Cabrera, 13-15, Campus Cantoblanco, UAM}, {\it\small 28049 Madrid, Spain}\\
		\vspace{2mm}\\
		{\bf\large Manuel Lainz Valcázar}\hspace{2mm}
		\vspace{1mm}\\
		{\it\small Instituto de Ciencias Matem\'aticas (CSIC-UAM-UC3M-UCM) }\\
		{\it\small Calle Nicol\'as Cabrera, 13-15, Campus Cantoblanco, UAM}, {\it\small 28049 Madrid, Spain}\\
		\vspace{2mm}\\
		{\bf\large Manuel de Le\'on}\hspace{2mm}
		\vspace{1mm}\\
		{\it\small Instituto de Ciencias Matem\'aticas and Real Academia Espa\text{\~{n}}ola de Ciencias}\\
		{\it\small Calle Nicol\'as Cabrera, 13-15, Campus Cantoblanco, UAM}, {\it\small 28049 Madrid, Spain}\\
	}

	\maketitle
	
	\begin{abstract}
In this article, we continue the program started in~\cite{termo1} of exploring an important class of thermodynamic systems from a geometric point of view. In order to model the time evolution of systems verifying the two laws of thermodynamics, we show that the notion of evolution vector field is adequate to appropriately describe such systems. Our formulation naturally arises from the introduction of a skew-symmetric bracket to which numerical methods based on discrete gradients fit nicely. Moreover, we study the corresponding Lagrangian and Hamiltonian formalism, discussing the fundamental principles from which the equations are derived. An important class of systems that is naturally covered by our formalism are composed thermodynamic systems, which are described by at least two thermal variables and exchange heat between its components.	\end{abstract} 
	
	\let\thefootnote\relax\footnote{\noindent AMS {\it Mathematics Subject Classification (2010)}. Primary 37J55; Secondary  37D35, 70G45, 80M25.\\
		\noindent Keywords. contact geometry, thermodynamic systems, single bracket formulation, discrete gradient methods}

\section{Introduction}
\label{sec:1}

In this paper, we continue the differential geometric study initiated in our paper about the  evolution vector field associated with a contact system~\cite{termo1}. However, when we analyze more complex thermodynamic examples, we will soon realize that it is often necessary to consider alternative geometrical structures from those of contact geometry and/or combine them properly.
	
Our approach differs from others in  the previous literature, where different authors introduce simultaneously a skew-symmetric and a symmetric bracket with combined properties that allow the two laws of thermodynamics to be satisfied. This is the case, for instance, of metriplectic structures (see~\cite{Kaufman, morrison} and references therein) or also the so-called single generation formalism~\cite{Ed-Ber, Ed-Ber-2}. Alternatively, other authors such as Gay-Balmaz and Yoshimura introduce in~\cite{Gay-Balmaz2017, Gay-Balmaz2019} a ``variational principle'' for the description of thermodynamic systems by means of	phenomenological and variational constraints.
	
The application of contact geometry (see~\cite{Hermann, Arn1990}) to model thermodynamics is suggested by Gibbs’ fundamental relation, which relates extensive and intensive variables defining the state of thermodynamic systems. Contact geometry is the  odd  dimensional counterpart of symplectic geometry~\cite{Godbillon1969, marle}. From this point of view, the flow of the restriction of the contact vector field to a Legendre submanifold of an appropriate contact manifold is interpreted as a thermodynamic process~\cite{Mruga,Mrugala1991}.
The most familiar symplectic framework can be obtained by {\em symplectification} of the contact manifold, obtaining a symplectic manifold with an additional structure of homogeneity bringing together energy and entropy representations (see also~\cite{BV2001,VdSM2018}).
	
In the present paper, after quickly reviewing contact geometry in Section \ref{sec2}, we focus not on the dynamics derived from the contact vector field as in~\cite{Bravetti2017,Bravetti2018, deLeon2018} but instead we will choose a different vector field: the so-called evolution vector field. We remark that the evolution vector field is defined exclusively in terms of a bi-vector or, equivalently, a skew-symmetric bracket of functions. This fact allows the use of discrete gradient methods~\cite{GONZ,ITOH,termo1} to numerically simulate the dynamics of thermodynamic processes. In~\cite{termo1}, we show that the evolution vector field preserves the Hamiltonian, which means that it models an isolated system. 
Moreover, we show that this vector field is tangent to the kernel of the contact form, which corresponds to the first law of thermodynamics. 

As a novelty, we state in Section \ref{sec:Lagrangian} a ``variational principle'' for the evolution dynamics in a Lagrangian framework, which is a generalization of Chetaev's principle (cf.~\cite{CILM2004}).

In Section \ref{sec4}, we briefly review some basic principles of Thermodynamics and how the evolution vector field models isolated simple thermodynamic systems with friction (see~\cite{termo1}).

In the final part of our paper, we will show in Section \ref{sec5} that the evolution vector field or Hamiltonian vector field associated to a skew-symmetric bracket is also useful to describe more complex systems outside the scope of contact geometry. For instance, we study as an example a system composed of at least two subsystems exchanging heat with each other~\cite{Romero2010, PGR2017}.

After that, in Section~\ref{sec6} we present numerical methods that have the property of satisfying the two laws of thermodynamics. To illustrate this point, we run a simulation for each class of systems addressed in the paper.

Finally, in the last section, we conclude pointing out some possible directions for further research.

\section{Contact geometry}\label{sec2}
In this section, we recall the facts about contact geometry which will be necessary for developing our formalism~\cite{Godbillon1969, marle,deLeon2018}. We will define the Hamiltonian and the evolution vector fields and compare its properties.
	
	Let $(M, \eta)$ be a contact manifold. That is, $M$ is a $(2n+1)$-dimensional manifold and $\eta$ is a  1-form that satisfies $\eta\wedge (d\eta)^n\not=0$ at every point.
	
	The Reeb vector field $R\in {\mathfrak X}(M)$ is the unique vector field satisfying 
	\[
	i_R\eta=1 \quad \hbox{and}\quad  i_Rd\eta=0\; .
	\]
	The contact form $\eta$ induces an isomorphism of $C^{\infty}(M, \R)$ modules by
		\[
	\begin{array}{rrcl}
\flat:& {\mathfrak X}(M)&\longrightarrow& \Omega^1(M)\\
	& X&\longmapsto& i_Xd\eta +\eta(X)\eta
	\end{array}
	\]
	Observe that $\flat^{-1}(\eta)=R$. 

	Using the generalized Darboux theorem,  we have canonical coordinates $(q^i, p_i, S)$, $1\leq i\leq n$ in a neighborhood of every point $x\in M$, such that the contact 1-form $\eta$ and the Reeb vector field  are: 
	\[
	\eta=dS-p_i\; dq^i \qquad \hbox{and} \qquad R=\frac{\partial}{\partial S}\; .
	\]

	The contact structure also provides a Whitney sum decomposition of the tangent bundle $TM =  \ker \eta \oplus \langle \Reeb \rangle$, with projectors
	\begin{equation}\label{contact_projectors}
		\mathcal{P} = \Id - \Reeb \otimes \eta \; \text{ and } \; \mathcal{Q} = \Reeb \otimes \eta,
	\end{equation}
	 onto $\ker \eta$ and $\langle \Reeb \rangle$, respectively.
	 
		An important example of contact manifold is the extended cotangent bundle $T^*Q\times \R$, where  $Q$ is $n$-dimensional manifold, which is naturally equipped with the following contact form
	\[
	\eta_Q =pr_2^* (dS)-pr_1^* (\theta_Q)\equiv dS-\theta_Q
	\]
	where $pr_1: T^*Q\times \R\rightarrow T^*Q$ and $pr_2: T^*Q\times \R\rightarrow \R$  are the canonical projections and $\theta_Q$ is the Liouville 1-form on the cotangent bundle
	defined by
	\[
	\theta_Q (X_{\mu_q})=\langle \mu_q, T_{\mu_{q}}\pi_Q X_{\mu_q}\rangle
	\]
	being $X_{\mu_q}\in T_{\mu_q} T^*Q$.  
Taking bundle  coordinates $(q^i,p_i)$ on $T^*Q$ we have that 
$\eta=dS-p_idq^i$.

On this situation, we notice that $\omega =  {\dd \eta}\vert_{\ker \eta}$, then $(\ker \eta, \omega)$ is a symplectic vector bundle over $T^*Q \times \R$. Hence, there is a unique vector field $\Delta_Q$ tangent to $\ker \eta$ satisfying $i_{\Delta_Q} \omega = \theta_Q \vert_{\ker \eta}$. We call $\Delta_Q$ the \emph{Liouville vector field} and, in Darboux coordinates, it is given by
	\begin{equation}
		\Delta_Q = p_i \frac{\partial }{\partial p_i}.
	\end{equation}

	We remark that $(\dd S, -\dd \theta_Q)$ is a cosymplectic structure which can be naturally constructed on $T^*Q \times \R$. Indeed, $\dd S$ and $-\dd \theta_Q$ are exact and $\dd S \wedge (\dd \theta)^n$ is a volume form. However, the dynamics arising from this cosymplectic structure are different from the ones arising from the contact structure~\cite{deLeon2019}.

	\subsection{The Jacobi structure of a contact manifold}
	Contact manifolds have an associated Jacobi structure. Indeed, we define the bivector $\Lambda$ as  
\begin{equation}\label{Lambda:intrinsic}
	\Lambda(\alpha, \beta)=-d\eta(\flat^{-1}(\alpha), \flat^{-1}(\beta)), \qquad \alpha, \beta \in \Omega^1(M)\; .
\end{equation}
In local coordinates, the bivector reads as
\begin{equation}\label{Lambda:coordinates}
	\Lambda=\frac{\partial}{\partial p_i}\wedge \left(\frac{\partial}{\partial q^i}+p_i\frac{\partial}{\partial S}\right).
\end{equation}
So that the pair $(\Lambda, E=-R)$ satisfies
\[
[\Lambda, \Lambda]=2E\wedge \Lambda \quad \hbox{and} \quad [\Lambda, E]=0\; .
\]
We define the morphism of $C^{\infty}(M, \R)$-modules $$\sharp_{\Lambda}: \Omega^1(M)\rightarrow {\mathfrak X}(M)$$ by $\langle \beta, \sharp_\Lambda(\alpha)\rangle=\Lambda(\alpha, \beta)$ with  $\alpha, \beta \in \Omega^1(M)$. 

	From this  Jacobi structure we can define the Jacobi bracket as follows: 
	\[
	\{f, g\}=\Lambda(df, dg)+f E(g)-g E(f), \quad f, g\in C^{\infty}(M, \R)
	\]
	The mapping  $\{\;  ,\;  \}: C^{\infty}(M, \R) \times  C^{\infty}(M, \R) \longrightarrow C^{\infty}(M, \R)$ is bilinear, skew-symmetric and satisfies the Jacobi’s identity but, in general, it does not satisfy the Leibniz rule; this last property is replaced by a weaker condition: 
	\[
	\hbox{Supp} \ \{f, g\}\subset 	\hbox{Supp} \ f\cap \hbox{Supp} \ g\; .
	\]
	This last condition is equivalent to the generalized Leibniz rule
	\begin{equation}
	    \{f, gh\} = g\{f, h\} + h\{f, g\} + ghE(h),
	\end{equation}
	In this sense, this bracket generalizes the well-known Poisson brackets. Indeed, a Poisson manifold is a particular case of Jacobi manifold in which $E=0$. 
	
	In local coordinates, the bracket is given by
	\begin{eqnarray*}
	\{f,g\} = \frac{\partial f}{\partial q^i}\frac{\partial g}{\partial p_i}
	-\frac{\partial f}{\partial S}\left(p_i\frac{\partial g}{\partial p_i}-g\right)+\frac{\partial g}{\partial S}\left(p_i\frac{\partial f}{\partial p_i}-f\right)
	\end{eqnarray*}
	
	We can define another bracket, the \textit{Cartan bracket}, by only using the bivector of the Jacobi structure. This bracket is bilinear, skew-symmetric and satisfies the Leibniz rule, but does not obey the Jacobi identity
	\begin{eqnarray*}
		[f, g]&=&\Lambda (df, dg)\\
		&=&\frac{\partial f}{\partial p_i}\frac{\partial g}{\partial q^i}-
		\frac{\partial f}{\partial q^i}\frac{\partial g}{\partial p_i}
		-\frac{\partial f}{\partial S}\left(p_i\frac{\partial g}{\partial p_i}\right)+\frac{\partial g}{\partial S}\left(p_i\frac{\partial f}{\partial p_i}\right)
	\end{eqnarray*}
		
The third bracket can be defined from the following bivector
\[
\Lambda_0=\Lambda+ \Reeb \wedge \Delta_Q
\]
which is Poisson, that is, $[\Lambda_0, \Lambda_0]=0$. 
In coordinates, 
\[
\Lambda_0=\frac{\partial}{\partial p_i}\wedge \frac{\partial}{\partial q^i}
\]
is like the canonical Poisson bracket on $T^*Q$ but now applied to functions on $T^*Q\times \R$.  In fact, this bracket the Jacobi bracket related to the cosymplectic structure $(\dd S,-\dd \theta_Q)$~\cite{deLeon2017}.

On the case that $M = T^*Q \times \R$, the Cartan bracket can be rewritten in terms of the Poisson bracket induced by $\Lambda_0$ and an extra term describing the thermodynamic behaviour. That is, 
\[
[f, g]=\{f, g\}_{\Lambda_0}-\frac{\partial f}{\partial S}\Delta_Q g+\frac{\partial g}{\partial S}\Delta_Q f
\]
We will denote by 
\[
\{
f, g
\}_{\Delta_Q}=\frac{\partial g}{\partial S}\Delta_Q f-\frac{\partial f}{\partial S}\Delta_Q g
\]
	then the Cartan bracket is written as in the single generation formalism~\cite{Ed-Ber, Ed-Ber-2}  as
	\begin{equation}\label{eee}
	[f, g]=\{f, g\}_{\Lambda_0}+\{
	f, g
	\}_{\Delta_Q}
	\end{equation}

\subsection{Hamiltonian and evolution vector fields}
Given a function $f: M \to \R$ on a contact manifold $(M, \eta)$ we define the following vector fields
\begin{itemize}
\item {\bf Hamiltonian or contact vector field} $X_f$ defined by
\[
X_f=\sharp_{\Lambda} (df)-fR
\]
equivalently, $X_f$ is the unique vector field such that
\[
\flat(X_f)=df-(R(f)+f)\, \eta\; .
\]

In canonical coordinates:
\[
X_f=\frac{\partial f}{\partial p_i}\frac{\partial}{\partial q^i}
-\left(\frac{\partial f}{\partial q^i}+p_i\frac{\partial f}{\partial S}\right)
\frac{\partial}{\partial p_i}+
\left(p_i\frac{\partial f}{\partial p_i}-f\right)\frac{\partial}{\partial S}
\]

\item {\bf The evolution or horizontal  vector field}
\[
{\mathcal E}_f=\sharp_{\Lambda} (df)=X_f+fR
\]
or
\[
\flat({\mathcal E}_f)=df-R(f)\, \eta\; .
\]
In canonical coordinates:
\begin{equation}\label{eq:evolutionvf}
	{\mathcal E}_f=\frac{\partial f}{\partial p_i}\frac{\partial}{\partial q^i}
-\left(\frac{\partial f}{\partial q^i}+p_i\frac{\partial f}{\partial S}\right)
\frac{\partial}{\partial p_i}+
p_i\frac{\partial f}{\partial p_i}\frac{\partial}{\partial S}
\end{equation}
\end{itemize}
	
	Now we will compare some properties of the Hamiltonian and evolution vector fields. Proofs can be found in~\cite{deLeon2018,termo1}
	\begin{proposition}\label{Hamiltonian_vs_evolution}
		The evolution vector field preserves the energy $f$, but the Hamiltonian vector field dissipates it:
		\begin{align*}
			X_f (f) &= - \Reeb(f) f,\\
			\mathcal{E}_f (f) &= 0.
		\end{align*}

		The Hamiltonian vector field preserves $\ker \eta$ (indeed, the flow of $X_f$ changes $\eta$ by a conformal factor), but the evolution vector field does not
		\begin{align*}
			\mathcal{L}_{X_f} \eta &=  -\Reeb(f) \eta, \\
			\mathcal{L}_{\mathcal{E}_f} \eta &= -\Reeb(f) \eta + \dd f.
		\end{align*}

		The evolution vector field is everywhere contained in $\ker \eta$, while the Hamiltonian vector field is only in $\ker \eta$  at the zero level set of the energy
		\begin{align*}
			i_{X_f} \eta &= -f, \\
			i_{\mathcal{E}_f} \eta &= 0.	
		\end{align*}
	\end{proposition}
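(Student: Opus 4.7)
The plan is to derive all six identities purely from the two intrinsic characterizations of $X_f$ and $\mathcal{E}_f$ via the isomorphism $\flat$, together with the defining properties $\eta(R)=1$ and $i_R\,d\eta=0$ of the Reeb vector field. I would not use Darboux coordinates except possibly as a final consistency check, because the coordinate-free route is shorter and cleaner.

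First I would compute $i_{X_f}\eta$ and $i_{\mathcal{E}_f}\eta$. Applying the 1-form identity $\flat(X)=i_X d\eta + \eta(X)\eta$ to $R$ and using $i_R d\eta=0$ together with $\eta(R)=1$ yields the general relation $\flat(X)(R)=\eta(X)$. Substituting the defining equations $\flat(X_f)=df-(R(f)+f)\eta$ and $\flat(\mathcal{E}_f)=df-R(f)\eta$ and evaluating on $R$ gives immediately $i_{X_f}\eta=R(f)-(R(f)+f)=-f$ and $i_{\mathcal{E}_f}\eta=R(f)-R(f)=0$. This disposes of the third pair of identities.

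Next I would handle the action on $f$. The evolution case is a one-line consequence of skew-symmetry of $\Lambda$:
\[
\mathcal{E}_f(f)=\langle df,\sharp_\Lambda(df)\rangle=\Lambda(df,df)=0.
\]
Then the Hamiltonian case follows from $X_f=\mathcal{E}_f-fR$, giving $X_f(f)=\mathcal{E}_f(f)-fR(f)=-fR(f)$.

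Finally I would compute the Lie derivatives via Cartan's magic formula $\mathcal{L}_X\eta=i_X d\eta+d(i_X\eta)$. The contraction $i_X d\eta$ is obtained by rearranging the $\flat$-identity as $i_X d\eta=\flat(X)-\eta(X)\eta$ and substituting the already computed value of $\eta(X)$; for $X_f$ this gives $i_{X_f}d\eta=df-R(f)\eta$, and for $\mathcal{E}_f$ it gives $i_{\mathcal{E}_f}d\eta=df-R(f)\eta$ as well. Adding $d(i_X\eta)$, which equals $-df$ for $X_f$ and $0$ for $\mathcal{E}_f$, produces the stated formulas $\mathcal{L}_{X_f}\eta=-R(f)\eta$ and $\mathcal{L}_{\mathcal{E}_f}\eta=-R(f)\eta+df$. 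There is no genuine obstacle here; the only point requiring care is to keep track of signs and of the extra $\eta(X)\eta$ summand in the definition of $\flat$, which is the one feature that distinguishes the contact case from its symplectic analogue.
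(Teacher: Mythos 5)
Your proof is correct and complete. The paper itself does not prove Proposition \ref{Hamiltonian_vs_evolution} in the text (it defers to the cited references), but your coordinate-free derivation --- evaluating $\flat(X)$ on $\Reeb$ to get $\eta(X_f)=-f$ and $\eta(\mathcal{E}_f)=0$, using skew-symmetry of $\Lambda$ for $\mathcal{E}_f(f)=0$ and the relation $X_f=\mathcal{E}_f-f\Reeb$ for $X_f(f)=-f\Reeb(f)$, and Cartan's formula with $i_Xd\eta=\flat(X)-\eta(X)\eta$ for the Lie derivatives --- is exactly the standard intrinsic argument and all signs check out.
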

	
	We compare the dynamics of the Hamiltonian~\cite{BLMP} and evolution~\cite{termo1} vector fields
	\begin{theorem}
		Let ${\mathcal L}_c(f)=f^{-1}(c)$ be a level set of $f: M\rightarrow \R$ where $c\in \R$. We assume that ${\mathcal L}_c(f)\not=0$ and $R(f)(x)\not= 0$ for all $x\in {\mathcal L}_c(f)$. 
		
		Then The 2-form $\omega_c\in \Omega^2({\mathcal L}_c(f))$ defined by
		\[
		\omega_c=-d i_{c}^*\eta
		\]
		is an exact symplectic structure, where
		 $i_c: {\mathcal L}_c f\hookrightarrow M$ is  the canonical inclusion 
		  Let $\Delta_c$ is the Liouville vector field of $T{\mathcal L}_c(f)$, given by 
	\[
	i_{\Delta_c}\omega_c=i_{c}^*\eta
	\]

	\begin{enumerate}
		\item  For the Hamiltonian vector field $X_f$:
		\begin{itemize}
			\item When $c \neq 0$, $X_f$ is the Reeb vector field of the contact form $\tilde{\eta} = \eta/f$.
			\begin{equation*}
				i_{X_f} \tilde{\eta}=1, \quad  i_{X_f} \tilde{\eta}=0. 
			\end{equation*}
			\item When $c=0$, the dynamics of $X_f$ is a reparametrization of the dynamics of the Liouville vector field.
			\begin{equation*}
				{X}_f\big|_{{\mathcal L}_0 (f)}=R(f)\big|_{{\mathcal L}_c(f)} \Delta_0
			\end{equation*}
		\end{itemize}
		\item The dynamics of $\mathcal{E}_f$ is a reparametrization of the dynamics of the Liouville vector field at any constant energy hypersurface.
		\begin{equation*}
			{\mathcal E}_{f}\big|_{{\mathcal L}_c (f)}=R(f)\big|_{{\mathcal L}_c(f)} \Delta_c
		\end{equation*}
	\end{enumerate}
\end{theorem}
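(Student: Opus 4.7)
I would split the theorem into three independent pieces: symplecticity of $\omega_c$, the identity for $\mathcal{E}_f$ on each level set, and the two cases for $X_f$.

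For symplecticity, $\omega_c$ is exact by construction, so only non-degeneracy is at stake. The hypothesis $R(f)\neq 0$ is precisely the transversality of the Reeb field to the hypersurface $\mathcal{L}_c(f)$, so for any basis $v_1,\dots,v_{2n}$ of $T_x\mathcal{L}_c(f)$ the tuple $(R,v_1,\dots,v_{2n})$ is a basis of $T_xM$. Evaluating $\eta\wedge(d\eta)^n$ on this basis and using $\eta(R)=1$ together with $i_R d\eta=0$ collapses the expansion to $(d\eta)^n(v_1,\dots,v_{2n})$, which is therefore nonzero. Hence $(i_c^*d\eta)^n$, and thus $(\omega_c)^n$, is a volume form on $\mathcal{L}_c(f)$.

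For the evolution vector field, I would start from $\flat(\mathcal{E}_f)=df-R(f)\eta$ combined with $i_{\mathcal{E}_f}\eta=0$ to read off $i_{\mathcal{E}_f}d\eta=df-R(f)\eta$. Since $\mathcal{E}_f(f)=0$ by Proposition~\ref{Hamiltonian_vs_evolution}, the field is tangent to $\mathcal{L}_c(f)$, and pulling back to the level set kills $df$, leaving
\[
i_{\mathcal{E}_f}\omega_c=-i_c^*(i_{\mathcal{E}_f}d\eta)=R(f)\,i_c^*\eta.
\]
Non-degeneracy of $\omega_c$ together with the defining relation $i_{\Delta_c}\omega_c=i_c^*\eta$ forces $\mathcal{E}_f|_{\mathcal{L}_c(f)}=R(f)|_{\mathcal{L}_c(f)}\,\Delta_c$.

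The Hamiltonian field splits according to the value of $c$. When $c=0$, writing $X_f=\mathcal{E}_f-fR$ and using $f|_{\mathcal{L}_0(f)}=0$ gives $X_f|_{\mathcal{L}_0(f)}=\mathcal{E}_f|_{\mathcal{L}_0(f)}$, so the statement reduces to the previous paragraph. When $c\neq 0$ I would directly verify the two Reeb conditions for $\tilde\eta=\eta/f$ on the open set $\{f\neq 0\}$: the contraction follows immediately from $i_{X_f}\eta=-f$ (up to a sign convention in the definition of $\tilde\eta$), while for $i_{X_f}d\tilde\eta=0$ I would expand $d\tilde\eta=f^{-1}d\eta-f^{-2}df\wedge\eta$ and substitute the three ingredients $i_{X_f}d\eta=df-R(f)\eta$ (read off from $\flat(X_f)=df-(R(f)+f)\eta$), $X_f(f)=-R(f)f$, and $\eta(X_f)=-f$; the resulting four terms cancel pairwise. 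The only genuine hazards are the sign bookkeeping and distinguishing global identities on $M$ from their restrictions to $\mathcal{L}_c(f)$; the volume-form computation for non-degeneracy is the one step that requires a slightly non-routine observation, so I would expect it to be the main obstacle for a reader.
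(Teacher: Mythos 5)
Your proof is correct, and it is worth noting that the paper itself does not prove this theorem at all: it states the result and refers the reader to the cited works of de Le\'on--Lainz and the authors' earlier paper, so your write-up is a genuinely self-contained argument rather than a variant of one in the text. All the key steps check out: transversality of $R$ to $\mathcal{L}_c(f)$ from $R(f)\neq 0$ plus $i_Rd\eta=0$ does collapse $\eta\wedge(d\eta)^n$ to $(d\eta)^n(v_1,\dots,v_{2n})$, giving non-degeneracy of $\omega_c$; the identity $i_{\mathcal{E}_f}d\eta=df-R(f)\eta$ together with tangency ($\mathcal{E}_f(f)=0$) and $i_c^*df=0$ yields $i_{\mathcal{E}_f}\omega_c=R(f)\,i_c^*\eta$, and non-degeneracy then forces $\mathcal{E}_f|_{\mathcal{L}_c(f)}=R(f)\Delta_c$; the $c=0$ case correctly reduces to this via $X_f=\mathcal{E}_f-fR$ (you might add the one-line remark that $X_f(f)=-R(f)f$ vanishes on $\mathcal{L}_0(f)$, so $X_f$ is indeed tangent there); and the computation $i_{X_f}d\tilde\eta=0$ from $i_{X_f}d\eta=df-R(f)\eta$, $X_f(f)=-R(f)f$, $\eta(X_f)=-f$ is right. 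Two cosmetic points: your contraction gives $i_{X_f}\tilde\eta=-1$, so with the paper's convention $\tilde\eta=\eta/f$ the Reeb field is actually $-X_f$ (the displayed equation in the statement, which repeats $i_{X_f}\tilde\eta$ twice, is itself a typo); and the $c\neq 0$ assertion is really a statement on the open set $\{f\neq 0\}$ rather than on the single level set, as you implicitly observe by working there.
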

	
\section{The Lagrangian formalism}\label{sec:Lagrangian}

\subsection{The geometric setting}

Let $L : TQ \times \R \longrightarrow \R$ be a regular contact Lagrangian function (see~\cite{deLeon2019,deLeon2020}). As before, let us introduce coordinates on $TQ \times \R$, denoted by $(q^i, \dot{q}^i, S)$, where $(q^i)$ are coordinates in $Q$, $(q^i, \dot{q}^i)$ are the induced bundle coordinates in $TQ$	and $S$ is a global coordinate in $\R$.

Given a Lagrangian function $L$, using the canonical endomorphism ${\mathbf S}$ on $TQ$ locally defined by
$$	{\mathbf S}= d q^i \otimes \frac{\partial}{\partial \dot{q}^i},	$$
one can construct a 1-form $\lambda_L$ on $TQ\times \R$ given by
$$	\lambda_L = {\mathbf S}^* (dL)	$$	
where now ${\mathbf S}$ and ${\mathbf S}^*$ are the natural extensions of ${\mathbf S}$ and its adjoint operator ${\mathbf S}^*$ to $TQ \times \R$~\cite{deLeon1987}.

Therefore, we have that
$$	\lambda_L = \frac{\partial L}{\partial \dot{q}^i} \, dq^i.	$$

Now, the 1-form on $TQ\times \R$ given by $\eta_L = dS -\lambda_L$ or, in local coordinates, by	
$$	\eta_L = dS - \frac{\partial L}{\partial \dot{q}^i} \, dq^i	$$
is a contact form on $TQ \times \R$ if and only if $L$ is regular; indeed, if $L$ is regular, then we may prove that	
$	\eta_L \wedge (d\eta_L)^n \not= 0$,
and the converse is also true. 

The corresponding Reeb vector field is given in local coordinates by
$$	{\mathcal R}_L = \frac{\partial}{\partial S} - W^{ij} \frac{\partial^2 L}{\partial \dot{q}^j \partial S} \, \frac{\partial}{\partial \dot{q}^i} ,$$
where $(W^{ij})$ is the inverse matrix of the Hessian $(W_{ij})$ with
\begin{equation}
    W_{ij} = \frac{\partial^2 L}{\partial \dot{q}^i \partial \dot{q}^j}.
\end{equation}

The energy of the system is defined by 
$$E_L = \mathbf{\Delta} (L) - L	$$
where $\mathbf{\Delta} = \dot{q}^i \, \frac{\partial}{\partial \dot{q}^i}$ is the natural extension of the Liouville vector field on $TQ$ to $TQ \times \R$. Therefore, in local coordinates we have that	
$$E_L = \dot{q}^i \, \frac{\partial L}{\partial \dot{q}^i} - L.$$

Denote by $\flat_L : T(TQ \times \R) \longrightarrow T^* (TQ \times \R)$	the vector bundle isomorphism given by	
$$\flat_L (v) = i_v (d\eta_L) + (i_v \eta_L) \, \eta_L$$
where $\eta_L$ is the contact form on $TQ \times \R$ previously defined. We shall denote its inverse isomorphism by $\sharp_L = (\flat_L)^{-1}$.

Let ${\xi}_L$ be the unique vector field satisfying the equation	
\begin{equation}\label{clagrangian1}
\flat_L ({\xi}_L) = dE_L - (\mathcal R_L E_L + E_L) \, \eta_L.
\end{equation}	

A direct computation from eq. (\ref{clagrangian1}) shows that if $(q^i(t), \dot{q}^i(t), S(t))$ is an integral curve of ${\xi}_L$, then it satisfies the generalized Euler-Lagrange equations considered by G. Herglotz in 1930:
\begin{equation}\label{clagrangian4}
\begin{split}
& \frac{d}{dt} \left(\frac{\partial L}{\partial \dot{q}^i}\right) - \frac{\partial L}{\partial q^i} = \frac{\partial L}{\partial \dot{q}^i} \frac{\partial L}{\partial S}\; ,\\
& \dot{S}=L(q^i, \dot{q}^i, S)\; .
\end{split}
\end{equation}

Now, given a regular Lagrangian function $L$, we may define the bi-vector $\Lambda_{L}$ on $TQ\times\R$ as in \eqref{Lambda:intrinsic} associated to the contact form $\eta_{L}$. That is, 
\begin{equation}\label{Lambda:intrinsic-1}
\Lambda_L(\alpha, \beta)=-d\eta_L(\flat_L^{-1}(\alpha), \flat_L^{-1}(\beta)), \qquad \alpha, \beta \in \Omega^1(TQ\times \R)\; .
\end{equation}

If $(q^i(t), \dot{q}^i(t), S(t))$ is an integral curve of the evolution vector field ${\Gamma}_{L}$ associated to the contact form $\eta_{L}$, which is the SODE vector field defined by
\begin{equation*}
{\Gamma}_{L}=\sharp_{\Lambda_{L}}(dE_{L}) \hbox{   or   }  	\flat_L ({\Gamma}_L) = dE_L - (\mathcal R_L E_L) \, \eta_L\; ,
\end{equation*}
then the curve satisfies the thermodynamic Herglotz equations
\begin{equation}\label{Herglotz:thermo}
\begin{split}
& \frac{d}{dt} \left(\frac{\partial L}{\partial \dot{q}^i}\right) - \frac{\partial L}{\partial q^i} = \frac{\partial L}{\partial \dot{q}^i} \frac{\partial L}{\partial S}. \\
& \dot{S}=\dot{q}^i\frac{\partial L}{\partial \dot{q}^i}.
\end{split}
\end{equation}

Moreover, if $H$ is the Hamiltonian function defined by $H=E_{L}\circ (\F L)^{-1}$, where $\F L:TQ\times \R\rightarrow T^{*}Q\times \R$ is the Legendre transform, then the evolution vector field ${\mathcal E}_{H}$ associated to $H$ is $\F L$-related to ${\Gamma}_{L}$.

\subsection{Generalized Chetaev principle}

In order to present a pseudo-variational principle, we will first formulate a variational problem for systems subjected to a particular non-standard type of nonholonomic constraints.

Let $M$ be a manifold and let $\eta$ be a semibasic 1-form on $TM$. Locally, if $(q^{i})$ are coordinates on $M$ and $(q^{i},\dot{q}^{i})$ are natural bundle coordinates on $TM$, we have that
\begin{equation*}
\eta(q,\dot{q})=\eta_{k}(q,\dot{q}) d q^{k}.
\end{equation*}
To every semibasic 1-form, we may associate in a canonical way a force map $F:TM\rightarrow T^{*}M$ by the formula
\begin{equation*}
\langle F(q,\dot{q}), T\tau_{M}(X) \rangle = \langle \eta (q,\dot{q}),X \rangle, \quad X\in T_{(q,\dot{q})}(TM),
\end{equation*}
where $\tau_{M}:TM\rightarrow M$ is the canonical tangent bundle projection. Locally, we have that
\begin{equation*}
F(q,\dot{q})=\eta_{k}(q,\dot{q}) d q^{k}.
\end{equation*}

Now, given a vector $w_{q}\in T_{q} M$, consider the set
\begin{equation*}
\mathcal{D}_{w_{q}}=\{ v_{q}\in T_{q}M \ | \ \langle F(w_{q}), v_{q} \rangle=0 \} \subseteq T_{q}M.
\end{equation*}

A \textit{semibasic nonholonomic constraint} is specified by a semibasic 1-form $\eta$ or, equivalently, by a smooth assignment
\begin{equation*}
w_{q} \mapsto \D_{w_{q}}.
\end{equation*}
Then, a trajectory $c:I\rightarrow M$ is said to satisfy the constraint if $\langle F(\dot{c}),\dot{c} \rangle=0$ or, equivalently, $\dot{c}\in \D_{\dot{c}}$.

\begin{definition}[Generalized Chetaev principle]
	Given a Lagrangian function $L:TM\rightarrow \R$ and a semibasic nonholonomic constraint $\eta$, a trajectory $c:I\rightarrow M$ satisfies the generalized Chetaev principle if
	\begin{equation}
	\delta \int L dt = 0 \quad \text{and} \quad \dot{c}\in \D_{\dot{c}},
	\end{equation}
	among variations with fixed endpoints satisfying $\delta c\in \D_{\dot{c}}$.
\end{definition}

Similarly to what happens with variational calculus we may obtain a set of equations as necessary and sufficient conditions to find trajectories satisfying the generalized Chetaev principle.

\begin{lemma}
	Given a Lagrangian function $L:TM\rightarrow \R$ and a semibasic nonholonomic constraint $\eta$, a trajectory $c:I\rightarrow M$ satisfies the generalized Chetaev principle if and only if ot satisfies the equations
	\begin{equation}\label{GCE}
	\begin{split}
	& \frac{d}{d t}\left( \frac{\partial L}{\partial \dot{q}^{i}} \right)-\frac{\partial L}{\partial q^{i}}=\lambda \eta_{i} \\
	& \eta_{i}\dot{q}^{i}=0.
	\end{split}
	\end{equation}
\end{lemma}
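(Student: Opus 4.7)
The plan is to run the standard Chetaev-type argument: compute the first variation of the action, integrate by parts using the fixed endpoint condition, and then use the constraint on the variations $\delta c$ to conclude that the Euler-Lagrange expression is a multiple of $\eta_i$.

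For the forward implication, let $c_{\epsilon}:I\rightarrow M$ be a smooth one-parameter family with $c_{0}=c$ and fixed endpoints, producing an infinitesimal variation $\delta q^{i}(t)=\frac{d}{d\epsilon}\big|_{\epsilon=0} q^{i}(c_{\epsilon}(t))$. By definition of a semibasic 1-form, $\delta c\in\mathcal{D}_{\dot{c}}$ amounts to the pointwise condition $\eta_{i}(q(t),\dot{q}(t))\,\delta q^{i}(t)=0$. The usual computation gives
\begin{equation*}
\delta\!\int L\,dt=\int\Bigl(\frac{\partial L}{\partial q^{i}}\delta q^{i}+\frac{\partial L}{\partial\dot{q}^{i}}\delta\dot{q}^{i}\Bigr)dt=\int\Bigl(\frac{\partial L}{\partial q^{i}}-\frac{d}{dt}\frac{\partial L}{\partial\dot{q}^{i}}\Bigr)\delta q^{i}\,dt,
\end{equation*}
where the integration by parts is justified by $\delta q^{i}$ vanishing at the endpoints. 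Hence the principle requires
\begin{equation*}
\int\Bigl(\frac{\partial L}{\partial q^{i}}-\frac{d}{dt}\frac{\partial L}{\partial\dot{q}^{i}}\Bigr)\delta q^{i}\,dt=0
\end{equation*}
for every compactly supported $\delta q\in C^{\infty}(I,\R^{n})$ satisfying $\eta_{i}(q(t),\dot{q}(t))\,\delta q^{i}(t)=0$ pointwise.

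The key step is to promote this integral identity to a pointwise one via a Lagrange multiplier argument. Using bump functions centered at arbitrary $t_{0}\in I$, the integral condition forces the vector $\alpha_{i}(t):=\tfrac{d}{dt}\tfrac{\partial L}{\partial\dot{q}^{i}}-\tfrac{\partial L}{\partial q^{i}}$ to annihilate every $\delta q\in\R^{n}$ lying in the hyperplane $\{v\in\R^{n}\mid\eta_{i}(q(t_{0}),\dot{q}(t_{0}))v^{i}=0\}$. Since the annihilator of that hyperplane is spanned by $\eta_{i}(q(t_{0}),\dot{q}(t_{0}))$, there exists a scalar $\lambda(t_{0})$ with $\alpha_{i}(t_{0})=\lambda(t_{0})\,\eta_{i}(q(t_{0}),\dot{q}(t_{0}))$. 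Letting $t_{0}$ vary yields the first equation of~\eqref{GCE}; the second is just the restatement of $\dot{c}\in\mathcal{D}_{\dot{c}}$ in coordinates, $\eta_{i}\dot{q}^{i}=0$.

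The converse is immediate by reversing the computation: if $c$ solves \eqref{GCE} and $\delta c\in\mathcal{D}_{\dot{c}}$ has vanishing endpoints, then $\delta\!\int L\,dt=\int\lambda\eta_{i}\,\delta q^{i}\,dt=0$ by the constraint on the variations. The main subtlety to be careful about is precisely the distinction between Chetaev's setup and a vakonomic one: we do not demand that the varied curves $c_{\epsilon}$ satisfy the constraint, only that the infinitesimal variations $\delta c$ lie pointwise in $\mathcal{D}_{\dot{c}}$; this is what allows the multiplier $\lambda(t)$ to be defined pointwise and not as the outcome of a further PDE. A minor technical point is that the argument needs $\eta(q(t),\dot q(t))\neq 0$ along $c$ in order to single out a one-dimensional annihilator; at points where it vanishes, the equation trivially holds for any $\lambda$.
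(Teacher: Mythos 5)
Your proposal is correct and follows essentially the same route as the paper's proof: first variation, integration by parts with fixed endpoints, and the observation that the Euler--Lagrange covector must lie in the annihilator $\mathcal{D}_{\dot{c}}^{o}=\operatorname{span}\{F(\dot{c})\}$, yielding the multiplier $\lambda$. You are merely more explicit than the paper on the localization via bump functions, the converse implication, and the degenerate case $\eta=0$, all of which are sound additions.
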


\begin{proof}
	Using the standard arguments from calculus of variations, we deduce that
	\begin{equation*}
		\delta \int L dt = \int \left[ \frac{\partial L}{\partial q^{i}}-\frac{d}{d t}\left( \frac{\partial L}{\partial \dot{q}^{i}} \right) \right] \delta q \ d t,
	\end{equation*}
	where we used integration by parts and the fact that the infinitesimal variations vanish at the endpoints. Then, in order that the integrand vanishes for arbitrary variations satisfying $\delta c\in \D_{\dot{c}}$ we must have that
	\begin{equation*}
		\frac{\partial L}{\partial q^{i}}-\frac{d}{d t}\left( \frac{\partial L}{\partial \dot{q}^{i}}\right) \in \D_{\dot{c}}^{o}.
	\end{equation*}
	But $\D_{\dot{c}}^{o}=\text{span}\{F(\dot{c})\}\subseteq T_{c}^{*}M$. Thus, the expression above must be a saclar multiple of the co-vector $F(\dot{c})$. Putting this fact together with the constraint $\dot{c}\in \D_{\dot{c}}$ we obtain equations \eqref{GCE}.
\end{proof}

The variational principle we have just stated is satisfied by the Lagrangian evolution vector field. More precisely, the integral curves of the Lagrangian evolution vector field satisfy a nonholonomic variational principle with nonlinear constraints. Indeed, the solutions of the equations of motion are critical points of the action with a condition of tangency to the contact distribution. This nonlinear nonholonomic principle is exactly the one described above and it is similar to the one introduced in~\cite{Gay-Balmaz2017}.

To observe this clearly, take $M$ to be the manifold $Q\times \R$ and consider the extended Lagrangian function $\hat{L}:T(Q\times \R)\rightarrow \R$ defined by
\begin{equation*}
	\hat{L}(v_{q},\zeta_{S})=L(v_{q},S), \quad (v_{q},\zeta_{S})\in T_{(q,S)}(Q\times \R),
\end{equation*}
where $L:TQ\times \R\rightarrow\R$ is a contact Lagrangian function. Take the pullback of the Lagrangian 1-form $\eta_L$ to $T(Q \times \mathbb{R})$, which we will also denote by $\eta_{L}$ and it is a semibasic 1-form. Let $(q^{i})$ be coordinates on $Q$, $(q^{i},\dot{q}^{i})$ natural tangent bundle coordinates on $TQ$ and $(q^{i},S,\dot{q}^{i},\dot{S})$ be coordinates on $T(Q\times \R)$. The local expression of the semibasic 1-form $\eta_{L}$ is
\begin{equation}
	\eta_{L}(q,S,\dot{q},\dot{S})=d S - \frac{\partial L}{\partial \dot{q}^i} d q^i.
\end{equation}

As before, we may associate to $\eta_{L}$ a force map given by
\begin{align*}
F_L: T(Q \times \mathbb{R}) &\to T^*(Q \times \mathbb{R})\\
(q, S, \dot{q}, \dot{S}) &\mapsto d S - \frac{\partial L}{\partial \dot{q}^i} d q^i.
\end{align*}

The nonholonomic constraint is determined at each point $(v_{q},\zeta_{S})\in T_{(q,S)}(Q\times \R)$ by those vectors $(\gamma_{q},\xi_{S})\in T_{(q,S)}(Q\times \R)$ such that 
\begin{equation*}
	\langle F_L(v_{q},\zeta_{S}),(\gamma_{q},\xi_{S})  \rangle=0.
\end{equation*}

\begin{proposition}
	Given a contact Lagrangian $L:TQ\times \R\rightarrow \R$, a curve $(q(t),\dot{q}(t),S(t))$ is an integral curve of the evolution vector field $\Gamma_{L}$ if and only if the associated curve $(q(t),S(t),\dot{q}(t),\dot{S}(t))$ satisfies the generalized Chetaev principle for the extended Lagrangian $\hat{L}$ with nonholonomic constraints determined by the semibasic 1-form $\eta_{L}$.
\end{proposition}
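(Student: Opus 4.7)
The plan is to apply the preceding lemma to the extended Lagrangian $\hat L$ and the semibasic $1$-form $\eta_L$ on $T(Q\times\R)$, and then observe that the resulting system of equations coincides (after eliminating the Lagrange multiplier) with the thermodynamic Herglotz equations~\eqref{Herglotz:thermo} that characterise integral curves of $\Gamma_L$.

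First, I would write down the components of $\eta_L$ explicitly on $T(Q\times\R)$: relative to coordinates $(q^i,S,\dot q^i,\dot S)$ we have $\eta_L = -\frac{\partial L}{\partial \dot q^i}\,dq^i + dS$, so the associated force map $F_L$ has components $\eta_i = -\partial L/\partial \dot q^i$ in the $q^i$-directions and $\eta_S = 1$ in the $S$-direction. The constraint $\langle F_L(\dot c),\dot c\rangle = 0$ therefore reads
\begin{equation*}
\dot S \;=\; \dot q^{i}\,\frac{\partial L}{\partial \dot q^{i}},
\end{equation*}
which is precisely the second of the thermodynamic Herglotz equations.

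Next, I would apply the lemma to $\hat L$ on $M = Q\times\R$ with the constraint distribution $\D$ induced by $\eta_L$. The generalised Chetaev equations split into a block in the $q^i$-directions and a block in the $S$-direction:
\begin{equation*}
\frac{d}{dt}\!\left(\frac{\partial \hat L}{\partial \dot q^{i}}\right)-\frac{\partial \hat L}{\partial q^{i}} \;=\; -\lambda\,\frac{\partial L}{\partial \dot q^{i}},\qquad
\frac{d}{dt}\!\left(\frac{\partial \hat L}{\partial \dot S}\right)-\frac{\partial \hat L}{\partial S} \;=\; \lambda.
\end{equation*}
Since $\hat L(v_q,\zeta_S)=L(v_q,S)$ is independent of $\dot S$, the $S$-equation determines the multiplier uniquely as $\lambda = -\partial L/\partial S$. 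Substituting back in the $q^i$-equations (and using that $\partial \hat L/\partial q^i = \partial L/\partial q^i$ and $\partial \hat L/\partial \dot q^i = \partial L/\partial \dot q^i$) gives
\begin{equation*}
\frac{d}{dt}\!\left(\frac{\partial L}{\partial \dot q^{i}}\right)-\frac{\partial L}{\partial q^{i}} \;=\; \frac{\partial L}{\partial S}\,\frac{\partial L}{\partial \dot q^{i}},
\end{equation*}
which is the first Herglotz equation.

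Together with the constraint equation and the tautological $\dot q^i = dq^i/dt$ (which comes from lifting the base curve to $T(Q\times\R)$), these are exactly the equations~\eqref{Herglotz:thermo} characterising integral curves of $\Gamma_L$, so the equivalence follows in both directions. The only mildly delicate point I expect to encounter is a bookkeeping one: carefully checking that passing from $TQ\times\R$ to $T(Q\times\R)$ via the extended Lagrangian really produces the correct semibasic $1$-form (which is why the explicit pullback computation of $\eta_L$ is recorded first), so that the multiplier $\lambda$ appears symmetrically in the two blocks and can be eliminated cleanly.
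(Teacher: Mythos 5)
Your proposal is correct and follows essentially the same route as the paper: apply the generalized Chetaev lemma to $\hat{L}$ with the semibasic form $\eta_L$, use the $\dot S$-independence of $\hat{L}$ to solve the $S$-block for the multiplier $\lambda = -\partial L/\partial S$, substitute into the $q^i$-block, and identify the constraint $\langle F_L(\dot c),\dot c\rangle=0$ with the entropy equation of~\eqref{Herglotz:thermo}. Your version is in fact slightly more explicit than the paper's (recording the components of $F_L$ and the constraint before eliminating $\lambda$), but the argument is the same.
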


\begin{proof}
	We will compare the equations satisfied by integral curves of the evolution vector field with the ones satisfied by the solutions of the corresponding generalized Chetaev principle.
	
	Indeed, the solution of the generalized Chetaev principle satisfy
	\begin{subequations}
		\begin{align}
		\frac{d}{dt} \left(\frac{\partial L}{\partial \dot{q}^i}\right) - \frac{\partial L}{\partial q^i} &= - \lambda \frac{\partial L}{\partial \dot{q}^i}. \\
		\frac{d}{dt} \left(\frac{\partial L}{\partial \dot{S}}\right) - \frac{\partial L}{\partial S} &= \lambda.
		\end{align} 
	\end{subequations}

Since $L$ does not depend on $\dot{S}$, the last equation is reduced to
\begin{equation}
\frac{\partial L}{\partial S} = -\lambda,
\end{equation}
So we retrieve the equations for the Lagrangian evolution vector field~\eqref{Herglotz:thermo} by adding the constraint equation
\begin{equation}
    \dot{S} = \frac{\partial L}{\partial \dot{q}^i} \dot{q}^i
\end{equation}
\end{proof}

\section{The evolution vector field and simple mechanical systems with friction}\label{sec4}

	\subsection{The laws of thermodynamics}
	In this section we will apply the evolution vector field to the description of simple mechanical vector field with friction. These systems can be described through one scalar thermal variable (in our formulation it will be the entropy) and finitely many mechanical variables (positions and velocities, in the Lagrangian formalism, or positions and momenta in Hamiltonian formalism). Furthermore, our system will be isolated: there will be no transfer of any form of matter or energy with its surroundings.

	The dynamics of the system will be described through a Lagrangian 
	\[
	\begin{array}{rrcl}
	L:& TQ\times \R&\longrightarrow& \R,\\
	   & (q^i, \dot{q}^i , S)&\longmapsto& L(q^i, \dot{q}^i, S),
	\end{array}
	\]
	or a Hamiltonian
	\[
	\begin{array}{rrcl}
	H:& T^*Q\times \R&\longrightarrow& \R,\\
	   & (q^i, p_i, S)&\longmapsto& H(q^i, p_i, S),
	\end{array}
	\]
	which, assuming that the Lagrangian is regular, both systems are connected through the Legendre transform, as explained in Section~\ref{sec:Lagrangian}.

	The integral curves of the evolution vector will describe the trajectories of the system. As we will see, these curves will satisfy the first law of thermodynamics for an isolated system. 
	
	The thermodynamic space of a Lagrangian system is naturally equipped with two linearly independent one forms: the work $\delta \mathcal{W}$ and the heat $\delta \mathcal{Q}$ one-forms. The energy of the system is given by an energy function $H:T^* Q \times \mathbb{R} \to \R$. For a closed system (one that does not exchange matter, but may exchange energy), the first law can be written as follows. Along any process, $\chi$,
	\begin{equation}
		d H = \delta \mathcal{Q} - \delta \mathcal{W}.
	\end{equation}
	Since our system is simple, the form $\delta \mathcal{Q}$ needs to have rank one. Therefore, it can be written as
	\begin{equation}
		\delta \mathcal{Q} = T \dd S,
	\end{equation}
	for some functions $T$ and $S$ (which are the temperature and the entropy). Furthermore, $\delta \mathcal{W}$ can be written locally as follows
	\begin{equation}
		\delta \mathcal{W} = P_i \dd q^i,
	\end{equation}
	which as many functions $P_i$ and $q^i$ as the rank of $\delta \mathcal{W}$. Moreover, $q^i$ and $S$ are functionally independent. In the physical interpretation, $P_i$ is the pressure. Hence, along $\chi$, the following is satisfied
	\begin{equation}
		d H = T d S - P_i d q^i.
	\end{equation}
	From this, by contracting with $\partial/\partial S$, we obtain the relationship
	\begin{equation}
		T = \frac{\partial H}{\partial S}.
	\end{equation}
	Furthermore, for an isolated system, the energy must be constant along $\chi$. Hence, we must have the relationship
	\begin{equation}
		0 = T d S - P_i d q^i,
	\end{equation}
	or, dividing by the temperature, and identifying
	\begin{equation}
		p_i = P_i/T,
	\end{equation}
	we obtain
	\begin{equation}
		\eta_Q =  d S - p_i d q^i = 0
	\end{equation}
	Hence, $\chi$ satisfies the first law of thermodynamics for an isolated system if and only if the energy is constant along $\chi$ and $\chi$ is tangent to $\ker \eta_Q$.

	From this comment and Proposition~\ref{Hamiltonian_vs_evolution} we can extract the following conclusion. 
\begin{proposition}\label{prop:law1}
	The integral curves of ${\mathcal E}_H$ describe an isolated system, that is
	$$\frac{d H}{dt}=0.$$ 
	Moreover, the time evolution of the entropy is locally given by
	$$\frac{\dd S}{\dd t} = p_i \frac{\dd q^i}{\dd t},$$
	which is exactly the first law of thermodynamics with $p_i = P_i/T$.
\end{proposition}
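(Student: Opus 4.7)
The plan is to derive both claims as immediate consequences of Proposition~\ref{Hamiltonian_vs_evolution} applied to the Hamiltonian $f=H$, supplemented by a short coordinate computation to match the resulting entropy equation with the first law as set up in the paragraph preceding the statement.

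First I would address energy conservation. Proposition~\ref{Hamiltonian_vs_evolution} gives $\mathcal{E}_f(f)=0$ for any smooth $f$; taking $f=H$ yields $\mathcal{E}_H(H)=0$, so along any integral curve $\chi(t)$ of $\mathcal{E}_H$ one has
\[
\frac{d}{dt}(H\circ\chi)\;=\;\mathcal{E}_H(H)\circ\chi\;=\;0,
\]
that is, $dH/dt=0$. This is precisely the defining property of an isolated system, since no energy crosses the boundary with the exterior.

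Next I would obtain the entropy equation. The same Proposition gives $i_{\mathcal{E}_H}\eta_Q=0$, i.e.\ the evolution vector field is everywhere tangent to $\ker\eta_Q$. In the Darboux coordinates $(q^i,p_i,S)$ with $\eta_Q=dS-p_i\,dq^i$, the explicit formula~\eqref{eq:evolutionvf} for $\mathcal{E}_H$ yields
\[
\dot q^i \;=\; \frac{\partial H}{\partial p_i},\qquad \dot S \;=\; p_i\,\frac{\partial H}{\partial p_i},
\]
so that $\dot S = p_i \dot q^i$ along integral curves. Equivalently, this identity can be read off directly from the tangency condition $\eta_Q(\mathcal{E}_H)=0$ without invoking coordinates at all.

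Finally, the interpretation is fixed by the discussion preceding the statement, where $T=\partial H/\partial S$ and $p_i=P_i/T$. Multiplying $\dot S = p_i\dot q^i$ by $T$ gives $T\dot S = P_i\dot q^i$, which combined with $\dot H=0$ is exactly the first law $dH=T\,dS-P_i\,dq^i$ specialised to an isolated simple system. I do not expect any real obstacle: the whole proof rests on invoking the two identities $\mathcal{E}_H(H)=0$ and $i_{\mathcal{E}_H}\eta_Q=0$ already established in Proposition~\ref{Hamiltonian_vs_evolution}; the only point requiring a small comment is to note that these two identities are really the two halves of the physical statement (energy isolation and $\eta_Q$-tangency) distilled from the preamble.
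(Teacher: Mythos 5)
Your proposal is correct and follows exactly the route the paper intends: the paper derives this proposition directly from the preceding discussion of the first law together with Proposition~\ref{Hamiltonian_vs_evolution} (the identities $\mathcal{E}_H(H)=0$ and $i_{\mathcal{E}_H}\eta=0$), which is precisely what you invoke, supplemented by the same coordinate reading of~\eqref{eq:evolutionvf}. No gaps.
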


\begin{remark}
Note that the first law of thermodynamics for an isolated system may be geometrically written as a tangency condition, that is,
$$i_{{\mathcal E}_H} \eta = 0.$$
\end{remark}

The second law of thermodynamics follows from the expression of the evolution vector fields \eqref{eq:evolutionvf}, \eqref{Herglotz:thermo} and depends on the choice of Hamiltonian function.
\begin{proposition}
	The integral curves of ${\mathcal E}_H$ (respectively {$\Gamma_L$}) satisfy the Second law of thermodynamics, that is,
	\begin{equation}
		\frac{\dd S}{\dd t} \geq 0
	\end{equation}
	 if and only if $\Delta_Q(H) \geq 0$ (respectively, $\mathbf{\Delta}(L) \geq 0$).
\end{proposition}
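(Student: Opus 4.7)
The plan is to simply read off the entropy component of the evolution vector field in Darboux coordinates, identify it as $\Delta_Q(H)$ (or $\mathbf{\Delta}(L)$), and conclude by pointwise positivity.

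First I would work on the Hamiltonian side. From the local expression of the evolution vector field in equation~\eqref{eq:evolutionvf}, the component along $\partial/\partial S$ equals $p_i \frac{\partial H}{\partial p_i}$. Hence along any integral curve of $\mathcal{E}_H$,
\begin{equation*}
\frac{\dd S}{\dd t} = \mathcal{E}_H(S) = p_i \frac{\partial H}{\partial p_i}.
\end{equation*}
On the other hand, the Liouville vector field on $T^{*}Q\times \R$ is $\Delta_Q = p_i \,\partial/\partial p_i$, so $\Delta_Q(H) = p_i \,\partial H/\partial p_i$. Therefore
\begin{equation*}
\frac{\dd S}{\dd t} = \Delta_Q(H),
\end{equation*}
and the equivalence $\frac{\dd S}{\dd t}\geq 0 \iff \Delta_Q(H)\geq 0$ is immediate (it holds pointwise along the trajectory, and since we are asking for the condition to hold on all of phase space, we obtain it globally).

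For the Lagrangian side the argument is identical. The second equation in the thermodynamic Herglotz equations~\eqref{Herglotz:thermo} states that $\dot{S} = \dot{q}^i \,\partial L/\partial \dot{q}^i$ along integral curves of $\Gamma_L$. Since $\mathbf{\Delta} = \dot{q}^i \,\partial/\partial \dot{q}^i$, this is exactly $\mathbf{\Delta}(L)$, so
\begin{equation*}
\frac{\dd S}{\dd t} = \mathbf{\Delta}(L),
\end{equation*}
and the stated equivalence follows. (Consistency between the two formulations is guaranteed by the fact that $\mathcal{E}_H$ and $\Gamma_L$ are $\mathbb{F}L$-related, and $\Delta_Q(H)\circ \mathbb{F}L = \mathbf{\Delta}(L)$ for a regular Lagrangian, as recorded at the end of Section~\ref{sec:Lagrangian}.)

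Since the entire argument is just reading off one coordinate of an already-displayed vector field, there is no real obstacle; the only thing to be careful about is stating the equivalence in the intended sense. Strictly speaking, $\frac{\dd S}{\dd t}\geq 0$ along every integral curve at every point is equivalent to $\Delta_Q(H)\geq 0$ (respectively $\mathbf{\Delta}(L)\geq 0$) as a function on the whole phase space, because the integral curves of $\mathcal{E}_H$ (resp.\ $\Gamma_L$) sweep out all of $T^{*}Q\times\R$ (resp.\ $TQ\times\R$). This is worth a single remark, but nothing more delicate is involved.
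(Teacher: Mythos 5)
Your proposal is correct and matches the paper's own (implicit) argument: the paper derives this proposition directly from the local expressions \eqref{eq:evolutionvf} and \eqref{Herglotz:thermo}, reading off the $\partial/\partial S$ component as $p_i\,\partial H/\partial p_i = \Delta_Q(H)$ (respectively $\dot{q}^i\,\partial L/\partial\dot{q}^i = \mathbf{\Delta}(L)$), exactly as you do. Your closing remark about why the pointwise statement along curves is equivalent to the global inequality on phase space is a sensible clarification that the paper leaves unstated.
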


\subsection{An example}
Let the Hamiltonian $H$ be given by
\begin{equation}\label{hami}
H(q^i, p_i, S)=\frac{1}{2} g^{ij}p_i p_j+V(q, S)
\end{equation}
where $(g^{ij})$ is a symmetric bilinear tensor on $Q$. Note that all the integral curves this system satisfies the second law of thermodynamics if and only if
\begin{equation}
	\Delta_{Q}(H) = 2 g^{ij} p_i p_j \geq 0,
\end{equation}
that is, if $g_{ij}$ is a positive semidefinite metric.

This can also be expressed form the brackets defined in~(\ref{eee}). Indeed, we have that 
	\begin{equation}\label{poi}
	\dot{f}=\{f, H\}_{T^*Q}+\{f, H\}_{\Delta_{Q}}.
	\end{equation}
Obviously, $\{H, H\}_{T^*Q}=\{H, H\}_{\Delta_Q}=0$  (first law) and 
$\{S, H\}_{T^*Q}=0$ and $\{S, H\}_{\Delta_Q}=\Delta_Q H\geq 0$ (second law). Observe that in Equation (\ref{poi}) both brackets are using the function $H$ as ``generator''. This is the reason that typically this formalism is known as {\sl single generator formalism}~\cite{Ed-Ber}.

\begin{example}{\bf Linearly damped system}
{\rm 
	
 Consider a linearly damped system~\cite{termo1} described by coordinates  $(q, p, S)$, where $q$ represents the position, $p$ the momentum of the particle and $S$ is the entropy of the surrounding thermal bath. {We assume that the system is subjected to a  viscous friction force, proportional to the minus velocity of the particle. }
The system is described by the  Hamiltonian 
\[
H(q, p, S)=\frac{p^2}{2m}+V(q)+\gamma S, \quad \gamma>0
\]
and $T=\frac{\partial H}{\partial S}=\gamma>0$ represents the temperature of the thermal bath.

Therefore, the equations of motion for ${\mathcal E}_H=\sharp_{\Lambda}(dH)$ are:
	\begin{eqnarray*}
		\dot{q}&=&\frac{p}{m}\\
		\dot{p}&=&-V'(q)-\gamma p\\
		\dot{S}&=& \frac{p^2}{m}
	\end{eqnarray*}

Obviously, the system is isolated since $\dot{H}=0$ and it is also clear from the equation for $\dot{S}$ that the first and second laws are satisfied since $\dot{S}\geq 0$.

In the Lagrangian side we obtain the system given by 
\begin{eqnarray*}
m\ddot{q}&=&{-V'(q)-\gamma m \dot{q}}\\
	\dot{S}&=& m\dot{q}^2.
\end{eqnarray*}
Observe that in this system the friction force is given by the map $F_{fr}:TQ\rightarrow T^{*}Q$ given by
\begin{equation*}
	{F_{fr}(q, \dot{q})=\gamma \dot{q}^{i} dq^{i}.}
\end{equation*}

Therefore, the equation of {entropy production} can be rewritten in terms of the friction force as follows
\[
T\dot{S}= -\langle F_{fr}(q, \dot{q}), \dot{q}\rangle
\]
{These equations coincide with the set of equations proposed in~\cite{Gay-Balmaz2017, Gay-Balmaz2019} for this particular choice of Lagrangian $L$ and friction force $F_{fr}$. Observe that, in this particular example where the temperature satisfies $T=\gamma$, the equations are only defined for values $\gamma>0$ and thus we are only modelling thermodynamic systems with non-zero temperature. }

Observe that the two brackets give
\begin{eqnarray*}
	\{H, g\}_{\Lambda_0}&=&\frac{p}{m}\frac{\partial g}{\partial q}-\frac{\partial g}{\partial p}V'(q)\\
	\{H, g\}_{\Delta_Q}&=&\frac{p^2}{m}\frac{\partial g}{\partial S}-\gamma p\frac{\partial g}{\partial p}
\end{eqnarray*}
and
\[
{\mathcal E}_H(g)=\dot{g}=\{H, g\}_{\Lambda_0}+\{H, g\}_{\Delta_Q}.
\]
Therefore it is clear that $\{H, H\}_{\Lambda_0}=0$ and $\{H, H\}_{\Delta_Q}=0$ (by skew-symmetry) and 
$\{H, S\}_{\Lambda_0}=0$ and 
$\{H, S\}_{\Delta_Q}=\frac{p^2}{m}\geq 0$.
}
\end{example}

\section{Composed thermodynamic systems without friction}\label{sec5}

In this section we will present a model for systems composed of at least two subsystems exchanging heat with each other (see \cite{Romero2010, upm39399}).

Consider two thermodynamic  systems indexed by $1$ and $2$ which may interact through a conducting wall.  On each system we have defined the corresponding Hamiltonian: 
\[
H: T^*(Q_1\times Q_2)\times{\mathbb R}^2\rightarrow {\mathbb R}
\]
where we consider coordinates $(q_\alpha, p_\alpha, S_\alpha)$ on $T^*Q_{\alpha}\times {\mathbb R}$, $\alpha=1, 2$,  where $S_\alpha$ are the entropies of each  subsystem.

In thermo-mechanical systems, as it is usual in Thermodynamics, the partial derivative of the energy with respect to the entropy will be the temperature of the system so that
\begin{equation*}
	T_{\alpha}(q_{1},p_{1},S_{1},q_{2},p_{2},S_{2})=\frac{\partial H}{\partial S_{\alpha}}(q_{1},p_{1},S_{1},q_{2},p_{2},S_{2})
\end{equation*}
denotes the temperature of subsystem $\alpha$.

Consider the Poisson tensor $\Lambda_{Q_1\times Q_2\times {\mathbb R}}$ on $T^*(Q_1\times Q_2)\times {\mathbb R}^2$ given by
\[
\Lambda_{Q_1\times Q_2\times {\mathbb R}}=\Lambda_{Q_1}+\Lambda_{Q_2}+\frac{\partial}{\partial S_{1}}\wedge \frac{\partial}{\partial S_{2}},
\]
where $\Lambda_{Q_\alpha}$ is the canonical Poisson tensor on $T^{*}Q_{i}$ with $\alpha=1,2$.

Assume that both subsystems exchange heat according to Fourier Law:
\[
h=k(T_2-T_1)
\]
where $k$ is the coefficient of thermal conductivity. 
Suppose that $T_\alpha>0$.

Consider the function $K: T^*(Q_1\times Q_2)\times {\mathbb R}^2\rightarrow {\mathbb R}$
\[
K=k \left( \frac{1}{\frac{\partial H}{\partial S_1}}-\frac{1}{\frac{\partial H}{\partial S_2}}\right)=k\left( \frac{1}{T_1}-\frac{1}{T_2}\right)
\]
which will be called \textit{Fourier factor}. Define the two-tensor (with Fourier factor $K$) denoted by $\Lambda_K$, given by
$$
\Lambda_K=\Lambda_{Q_1}+\Lambda_{Q_2}+K \frac{\partial}{\partial S_{1}}\wedge \frac{\partial}{\partial S_{2}}
$$
Observe that now $\Lambda_K$ is a skew-symmetric almost Poisson structure~\cite{MR1091922}.

The matrix representation of $\Lambda_K$ is: 
\[
\left(
\begin{array}{cccccc}
0&I_{{n_1\times n_1}}&0&0&0&0\\
-I_{{n_1\times n_1}}&0&0&0&0&0\\
0&0&0&0&0&k\left( \frac{1}{T_1}-\frac{1}{T_2}\right)\\
0&0&0&0&I_{{n_2\times n_2}}&0\\
0&0&0&-I_{{n_2\times n_2}}&0&0\\
0&0&k\left( \frac{1}{T_2}-\frac{1}{T_1}\right)&0&0&0
\end{array}
\right)
\]
where $\dim Q_i=n_i$, $i=1,2$. 

The  corresponding evolution vector field ${\mathcal E}_{H,K}$:
\begin{equation}
	{\mathcal E}_{H,K}=\sharp_{\Lambda_K}(dH)
\end{equation}	
	The integral curves of  ${\mathcal E}_{H,K}$ are:  
\begin{equation}
	\begin{split}
		\dot{q}_1&=\frac{\partial H}{\partial p_1} \\
		\dot{p}_1&=-\frac{\partial H}{\partial q_1}\\
		\dot{S}_1&=K\frac{\partial H}{\partial S_2}
	\end{split}
	\quad
	\begin{split}
		\dot{q}_2&=\frac{\partial H}{\partial p_2} \\
		\dot{p}_2&=-\frac{\partial H}{\partial q_2} \\
		\dot{S}_2&=-K\frac{\partial H}{\partial S_1}.
	\end{split}
\end{equation}
Observe that the total entropy $S=S_1+S_2$ satisfies
\begin{eqnarray*}
\dot{S}&=&{\mathcal E}_{H,K} (S_1+S_2)\\
&=& k(\frac{T_2}{T_1}-1)+k(\frac{T_1}{T_2}-1)\\
&=&k\frac{(T_2-T_1)^2}{T_1T_2}\geqslant 0
\end{eqnarray*}
	Moreover, in absence of external forces, the total energy $H$ is conserved since by skew-symmetry of $\Lambda_{K}$ we have that
	\[
	{\mathcal E}_{H,K}(H)=0,
	\]
hence the system is isolated.

Since the system presents no friction there is no work done by dissipative forces. Thus, the first law of thermodynamics for this system is just given by
\begin{equation}
    dH = \delta \mathcal{Q} = \delta \mathcal{Q}_1 + \delta \mathcal{Q}_2 = T_1 dS_1 + T_2 dS_2.
\end{equation}
It is easy to comprove that the equality is satisfied by the integral curves of $\mathcal{E}_{H,K}$, since the system is isolated and so the energy is constant along curves.
	
Thus, we have shown that:
\begin{proposition}
	Given a Hamiltonian function $H$, the integral curves of the evolution vector field $\mathcal{E}_{H,K}$ in the skew-symmetric manifold $(T^{*}(Q\times Q\times \R), \Lambda_{K})$, describe the dynamics of an isolated system, that is,
	$$\frac{d H}{d t}=0,$$
	which is composed by two thermodynamic subsystems without friction exchanging heat with each other, satisfying the first and second laws of Thermodynamics, that is
	\begin{equation*}
	    dH = T_1 dS_1 + T_2 dS_2 \quad \text{and} \quad \dot{S} \geqslant 0.
	\end{equation*}
\end{proposition}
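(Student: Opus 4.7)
The plan is to establish the three claims of the proposition directly from the explicit coordinate form of $\mathcal{E}_{H,K}$ and the skew-symmetry of $\Lambda_K$, handling energy conservation, the second law, and the first law in that order.

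First, for the conservation of $H$, I would observe that
\begin{equation*}
\mathcal{E}_{H,K}(H) = \langle dH, \sharp_{\Lambda_K}(dH)\rangle = \Lambda_K(dH, dH).
\end{equation*}
Since $\Lambda_K$ is skew-symmetric (this only requires the almost-Poisson property, not the Jacobi identity), the right-hand side vanishes identically, so $dH/dt = 0$ along any integral curve. This is the isolated-system statement.

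Next, for the second law, I would compute $\dot{S}=\dot{S}_{1}+\dot{S}_{2}$ from the integral-curve equations displayed before the proposition. Using $T_{\alpha}=\partial H/\partial S_{\alpha}$ and the definition $K=k(1/T_{1}-1/T_{2})$, a direct substitution gives
\begin{equation*}
\dot{S} = K T_{2} - K T_{1} = k\!\left(\tfrac{1}{T_{1}}-\tfrac{1}{T_{2}}\right)(T_{2}-T_{1}) = k\,\frac{(T_{2}-T_{1})^{2}}{T_{1}T_{2}},
\end{equation*}
which is $\geq 0$ under the standing assumption $T_{\alpha}>0$ and $k>0$. This reproduces the calculation already sketched in the text preceding the proposition.

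Finally, for the first law, the goal is the equality $dH = T_{1}dS_{1}+T_{2}dS_{2}$ understood along a trajectory, i.e.\ $\dot{H} = T_{1}\dot{S}_{1} + T_{2}\dot{S}_{2}$. The left-hand side vanishes by the first step, and the right-hand side also vanishes by the same substitution as in the second step: $T_{1}\dot{S}_{1}+T_{2}\dot{S}_{2} = T_{1}(KT_{2}) + T_{2}(-KT_{1}) = 0$. Thus both sides agree. I expect the main (mild) obstacle to be conceptual rather than computational: one must be careful that in the absence of mechanical work terms (no $P_{i}\,dq^{i}$ contribution here, since friction is absent), the first law reduces precisely to an equality of heat exchanges, and the interpretation of $dH = T_{1}dS_{1}+T_{2}dS_{2}$ is pointwise along the process and not as an identity of differential forms on the whole phase space. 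Once this is flagged, the proof is a one-line verification for each of the three conclusions.
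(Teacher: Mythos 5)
Your proposal is correct and follows essentially the same route as the paper: energy conservation from the skew-symmetry of $\Lambda_K$, the entropy inequality by direct substitution of $K=k(1/T_1-1/T_2)$ into $\dot{S}_1+\dot{S}_2$, and the first law verified along trajectories by noting that both $\dot{H}$ and $T_1\dot{S}_1+T_2\dot{S}_2$ vanish. Your explicit check that $T_1(KT_2)+T_2(-KT_1)=0$ makes precise what the paper only asserts, and your remark that the identity is to be read along the process rather than as an equality of one-forms on phase space is a fair clarification, not a deviation.
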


	\begin{example}
		The simplest toy model for this case is the two free thermo-particles example, composed by two particles at rest exchanging heat. The thermodynamic phase space is simply $\R^{2}$ on which we define the Hamiltonian function
		\begin{equation*}
			H(S_{a},S_{b})=c_{a}e^{\frac{S_{a}}{c_{a}}}+c_{b}e^{\frac{S_{b}}{c_{b}}},
		\end{equation*}
		where $c_{a},c_{b}$ are the \textit{heat capacities} of each particle. Then the evolution vector field satisfies the equations
		\begin{equation*}
			\begin{split}
				\dot{S}_{a} & =\kappa\left(\frac{T_{b}-T_{a}}{T_{a}}\right)\\
				\dot{S}_{b} & =\kappa\left(\frac{T_{b}-T_{a}}{T_{b}}\right),
			\end{split}
		\end{equation*}
		where the temperatures are the functions given by
		\begin{equation*}
			T_{a}=\frac{\partial H}{\partial S_{a}}, \quad T_{b}=\frac{\partial H}{\partial S_{b}}
		\end{equation*}
		and hence
		\begin{equation*}
			T_{a}=e^{\frac{S_{a}}{c_{a}}}, \quad T_{b}=e^{\frac{S_{b}}{c_{b}}}.
		\end{equation*}
	\end{example}
	
	\begin{example}
		A slight sophistication of of the previous toy model is the two free thermo-particles example, now composed by two particles moving freely on the manifolds $Q_{a},Q_{b}$. The thermodynamic phase space is now given by $(T^{*}Q_{a}\times \R)\times (T^{*}Q_{b} \times \R)$ on which we define the Hamiltonian function
		\begin{equation*}
			H(q_{a},p_{a},S_{a},q_{b},p_{b},S_{b})=\frac{1}{2}\left(\frac{p_{a}^2}{m_{a}}+\frac{p_{b}^2}{m_{b}}\right)+c_{a}e^{\frac{S_{a}}{c_{a}}}+c_{b}e^{\frac{S_{b}}{c_{b}}},
		\end{equation*}
		where $m_{a}, m_{b}$ are the masses of each particle.
	\end{example}	
			
	\begin{example}
		Now, to obtain more interesting examples, we may add to the previous example a potential function depending on the position variables $V:Q_{a}\times Q_{b}\rightarrow \R$. Indeed, a physical example is the two thermo-spring system (cf. \cite{upm39399}).
		
		In this case, the system is modelled by a Hamiltonian function of the type
		\begin{equation*}
			H(q_{a},p_{a},S_{a},q_{b},p_{b},S_{b})=\frac{1}{2}\left(\frac{p_{a}^2}{m_{a}}+\frac{p_{b}^2}{m_{b}}\right)+V(q_{a},q_{b})+c_{a}e^{\frac{S_{a}}{c_{a}}}+c_{b}e^{\frac{S_{b}}{c_{b}}}.
		\end{equation*}	
	\end{example}

\section{Geometric integration of thermodynamic systems}\label{sec6}

Numerical methods for general thermodynamic systems are implemented usually  using the metriplectic formalism (see~\cite{mielke, ignacio}). However, in our case, for the examples that we are considering,  we can easily adapt the construction of discrete gradient methods to the bivector $\Lambda$. 

For  simplicity, we will assume that {$Q=\R^{n}$}. Then the systems that we want to study are described by the  ODEs 
$$\dot{x}=(\sharp_{\Lambda})_x (\nabla H(x)),$$
with {$x=(q^i, p_i, S)\in T^{*}Q\times \R$, the map $H:T^{*}Q\times \R \rightarrow \R$ is the Hamiltonian function and $\nabla H(x)\in\mathfrak{X}(T^{*}Q\times \R)$ is the standard gradient in $T^{*}Q\times \R$ identified with $\R^{2n+1}$}, with respect to the Euclidean metric.

Using discretizations of the gradient $\nabla H(x)$ it is possible to define a class of integrators which preserve the first integral $H$ exactly.

\begin{definition}\label{def31}
	Let $H:\mathbb{R}^N\longrightarrow \mathbb{R}$ be a differentiable function. Then $\bar{\nabla}H:\mathbb{R}^{2N}\longrightarrow \mathbb{R}^N$ is a discrete gradient of $H$ if it is continuous and satisfies
	\begin{subequations}
		\label{discGrad}
		\begin{align}
		\bar{\nabla}H(x,x')^T(x'-x)&=H(x')-H(x)\, , \quad \, \mbox{ for all } x,x' \in\mathbb{R}^N  \, ,\label{discGradEn} \\
		\bar{\nabla}H(x,x)&=\nabla H(x)\, , \quad \quad \quad \quad \mbox{ for all } x \in\mathbb{R}^N  \, . \label{discGradCons}
		\end{align}
	\end{subequations}
\end{definition}

Some examples of discrete gradients are  (see~\cite{MR1694701} and references therein)
\begin{itemize}
	\item The {\bf mean value (or averaged) discrete gradient} given by
	\begin{equation}
	\label{AVF}
	\bar{\nabla}_{1}H(x,x'):=\int_0^1 \nabla H ((1-\xi)x+\xi x')d\xi \, , \quad \mbox{ for } x'\not=x \, .
	\end{equation}
	
	\item The {\bf midpoint (or Gonzalez) discrete gradient} given by
	\begin{align}
	\bar{\nabla}_{2}H(x,x')&:=\nabla H\left( \frac{1}{2}(x'+x)\right)\label{gonzalez}\\&+\frac{H(x')-H(x)
		-\nabla H\left( \frac{1}{2}(x'+x)\right)^T(x'-x)}{|x'-x|^2}(x'-x) \, , \nonumber
	\end{align}
for  $x'\not=x$. 
	\item The {\bf coordinate increment discrete gradient} where  each component given by
	\begin{equation*}
	\label{itoAbe}
	\bar{\nabla}_{3}H(x,x')_i=\frac{H(x'_1,\ldots,x'_i,x_{i+1},\ldots,x_n)-H(x'_1,\ldots,x'_{i-1},x_{i},\ldots,x_n)}{x'_i-x_i}
	\end{equation*}
	$1\leq i \leq N$,  
	when $x_i'\not=x_i$, and $$\bar{\nabla}_{3}H(x,x')_i=\frac{\partial H}{\partial x_i}(x'_1,\ldots,x'_{i-1},x'_i=x_{i},x_{i+1},\ldots,x_n),$$ otherwise.
\end{itemize}

\subsection{Simple thermodynamic systems with friction}

Let $H:T^{*}Q\times \R \rightarrow \R$ be the Hamiltonian function. If we choose the midpoint discrete gradient $\bar{\nabla}{2} H$, it is straightforward to define an energy-preserving integrator by the equation
\begin{equation}\label{discrete:gradient:integrator}
\frac{x_{k+1}-x_k}{h}=\sharp_{\Lambda}\left(\frac{x_k+x_{k+1}}{2}\right)\left(\bar{\nabla}_2H(x_k,x_{k+1})\right),
\end{equation}
where $\Lambda$ is the bivector associated to the canonical contact structure $\eta_{Q}$ of $Q=\R^{2n+1}$, given in local coordinates by \eqref{Lambda:coordinates}.

As in the continuous case, it is immediate to check that $H$ is exactly preserved using \eqref{discrete:gradient:integrator} and the skew-symmetry of $\Lambda$
\begin{equation*}
	\begin{split}
		H(x_{k+1})-H(x_k) & =\bar{\nabla}_2H(x_k,x_{k+1})^T(x_{k+1}-x_k) \\
		& =h \Lambda (\bar{\nabla}_2H(x_k,x_{k+1}),  \bar{\nabla}_2H(x_k,x_{k+1}))=0.
	\end{split}
\end{equation*}

On the other hand, by \eqref{discrete:gradient:integrator} the entropy satisfies
$$
S_{k+1}-S_k=h\Lambda (\bar{\nabla}_2H(x_k,x_{k+1}), dS). 
$$
If $H$ is of the form (\ref{hami}) with $V$ a quadratic function then 
\[
H(x_{k+1})-H(x_k)=dH\left(\frac{x_k+x_{k+1}}{2}\right) (x_{k+1}-x_k).
\]
In fact this is a well-known property of quadratic functions. Hence, we must have
$$d H\left(\frac{x_k+x_{k+1}}{2}\right)=\bar{\nabla}_2 H(x_k,x_{k+1}),$$
so that
\begin{equation*}
	\begin{split}
		S_{k+1}-S_k & = h\Lambda \left(d H\left(\frac{x_k+x_{k+1}}{2}\right), dS\right) \\
		& =h\left(\frac{p_{k}^{i}+p_{k+1}^{i}}{2}\right)\frac{\partial H}{\partial p^{i}}\left(\frac{x_k+x_{k+1}}{2}\right)\geq 0,
	\end{split}
\end{equation*}
since by \eqref{Lambda:coordinates} we have that
\begin{equation*}
	\Lambda(dq^{i},dS)=0, \quad \Lambda(dp_{i},dS)=p_{i} \quad \text{and} \quad \Lambda(dS,dS)=0.
\end{equation*}

\begin{example}
	Consider the Hamiltonian function $H:T^{*}Q\rightarrow \R$ given by
	\begin{equation}\label{harmonic:osc}
	H(q,p,S)=\frac{p^{2}}{2}+\frac{q^{2}}{2}+\gamma S,
	\end{equation}
	where $Q=\R$, which is the Hamiltonian function associated with the damped harmonic oscillator.
	
	Now, if we may apply the midpoint discrete gradient and the associated integrator given by \eqref{discrete:gradient:integrator}, we obtain the following integrator
	\begin{equation}\label{DG2:harmonic}
	\begin{split}
	q_{1} = & \frac{2 \gamma h q_{0}-h^2 q_{0}+4 h p_{0}+4 q_{0}}{2 \gamma h+h^2+4} \\
	p_{1} = & -\frac{2 \gamma h p_{0}+h^2 p_{0}+4 h q_{0}-4 p_{0}}{2 \gamma h+h^2+4} \\
	S_{1} = & \frac{S_{0} h^4+(4 S_{0} \gamma+4 q_{0}^2) h^3+(4 S_{0} 	\gamma^2-16 p_{0} q_{0}+8 S_{0}) h^2}{(2 \gamma h+h^2+4)^2} \\
	& +\frac{(16 S_{0} \gamma+16 p_{0}^2) h+16 S_{0}}{(2 \gamma h+h^2+4)^2}.
	\end{split}
	\end{equation}
	
	Of course, using equations \eqref{DG2:harmonic} we obtain an integrator with constant energy and increasing entropy. In Figure \ref{fig:test1} we can see that the qualitative behaviour of the integrator is fairly accurate, while in Figure \ref{fig:test2} we see the entropy increases at the same rate as the exact one.
	
	\begin{figure}[htb!]
		\centering
		\includegraphics[width=0.7\linewidth]{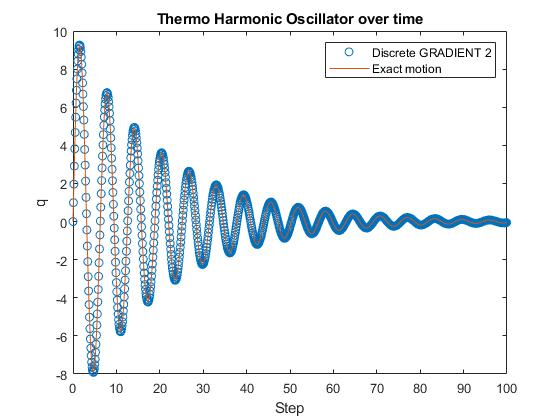}
		\caption{Trajectory of \eqref{DG2:harmonic}: the initial data are $q_{0}=0$, $p_{0}=10$ and $S_{0}=0$; the step is $h=0.1$ and $\gamma=0.1$. We plot the positions $q_{k}$ and compare the integrator with the integral curve of the evolution dynamics ${\mathcal E}_{H}$.}
		\label{fig:test1}
	\end{figure}
	
	\begin{figure}[htb!]
		\centering
		\includegraphics[width=0.7\linewidth]{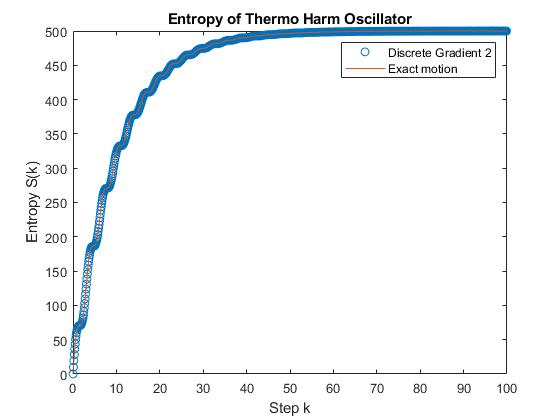}
		\caption{{Entropy} of \eqref{DG2:harmonic}: using the same initial data and settings from Figure \ref{fig:test1}, we plot the error with respect to the exact motion.}
		\label{fig:test2}
	\end{figure}
	
\end{example}

\subsection{Composed thermodynamic systems}

Let $H:T^{*}(Q_{1}\times Q_{2})\times \R^{2}\rightarrow \R$ be the Hamiltonian function defined in a $2(n_{1}+n_{2})+2$ manifold. Moreover, suppose that $Q_{i}$ is a $n_{i}$-dimensional vector space for $i=1,2$.

Using a discrete-gradient approach, given a Hamiltonian function $H:\R^{n}\rightarrow \R$ we may find the midpoint discrete gradient $\nabla_{2}H:\R^{2n}\rightarrow \R^{n}$. Then, in our case, we may define an algorithm in the following way:
\begin{equation*}
\frac{x_{k+1}-x_{k}}{h}=\sharp_{\Lambda_K}\left(\frac{x_{k+1}+x_{k}}{2}\right)(\nabla_{2}H(x_{k},x_{k+1})),
\end{equation*}
where $x_{k}=(q_{1}^{k},p_{1}^{k},S_{1}^{k},q_{2}^{k},p_{2}^{k},S_{2}^{k})$ is a point in $T^{*}(Q_{1}\times T^{*}Q_{2})\times \R^{2}$.

This method will lead to an energy preserving algorithm. Moreover, we have the following result describing the evolution of the total entropy.

\begin{lemma}
	If $H$ is a quadratic function and $T_{i}=\frac{\partial H}{\partial S_{i}}$ we have that
	\begin{equation*}
	S^{k+1}-S^{k}=hk\frac{(T_{2}-T_{1})^2}{T_{1}T_{2}}\geqslant 0,
	\end{equation*}
	where $S^{k}=S_{1}^{k}+S_{2}^{k}$ is the total entropy at step $k$.
\end{lemma}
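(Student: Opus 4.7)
The plan is to reduce the discrete computation to the continuous one via the well-known fact that, for a quadratic $H$, the midpoint (Gonzalez) discrete gradient coincides with the exact gradient evaluated at the midpoint, a fact already invoked in the previous subsection. Once this reduction is made, the entropy update formula mirrors exactly the computation $\dot S = k(T_2-T_1)^2/(T_1T_2)$ performed in the continuous setting.

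First, I would set $x_{\rm mid} = \tfrac{1}{2}(x_k + x_{k+1})$ and recall that for quadratic $H$ the second term in the Gonzalez formula \eqref{gonzalez} vanishes identically, so $\bar\nabla_2 H(x_k, x_{k+1}) = \nabla H(x_{\rm mid})$. Consequently, the discrete scheme reads
\begin{equation*}
\frac{x_{k+1}-x_k}{h} = \sharp_{\Lambda_K}(x_{\rm mid})\bigl(\nabla H(x_{\rm mid})\bigr).
\end{equation*}
Reading off the $S_1$ and $S_2$ components, and using the matrix representation of $\Lambda_K$ exactly as in the derivation of the continuous equations for $\dot S_1$ and $\dot S_2$, I obtain
\begin{equation*}
S_1^{k+1}-S_1^k = h\, K(x_{\rm mid})\, T_2(x_{\rm mid}), \qquad S_2^{k+1}-S_2^k = -h\, K(x_{\rm mid})\, T_1(x_{\rm mid}),
\end{equation*}
where $T_i = \partial H/\partial S_i$ and $K = k(1/T_1 - 1/T_2)$ are now evaluated at $x_{\rm mid}$.

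Adding the two expressions gives
\begin{equation*}
S^{k+1}-S^k = h\, K\,(T_2-T_1)\big|_{x_{\rm mid}} = h\, k\left(\frac{1}{T_1}-\frac{1}{T_2}\right)(T_2-T_1)\Big|_{x_{\rm mid}} = h\,k\,\frac{(T_2-T_1)^2}{T_1 T_2}\bigg|_{x_{\rm mid}},
\end{equation*}
which is manifestly nonnegative provided the temperatures are positive at the midpoint. This matches the statement with the understanding that $T_1,T_2$ are evaluated at $x_{\rm mid}$.

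The only real subtlety is the quadratic-gradient identification, which is where the hypothesis ``$H$ quadratic'' is used; everything else is algebraic bookkeeping dictated by the structure of $\Lambda_K$ and the definition of $K$. If one wanted to avoid the quadratic assumption, the bracket $\Lambda_K(\bar\nabla_2 H, dS_i)$ would no longer reduce to $K\,T_j$ at the midpoint, so nonnegativity of $S^{k+1}-S^k$ would require a more delicate estimate on the difference between $\bar\nabla_2 H$ and $\nabla H(x_{\rm mid})$; this is the natural obstacle to generalizing the lemma.
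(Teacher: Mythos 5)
Your proof is correct and takes essentially the same route as the paper's: reduce the Gonzalez discrete gradient to the exact gradient at the midpoint using the quadratic hypothesis, read off the $S_1$, $S_2$ updates from $\Lambda_K$, and sum. Your explicit remark that $T_1$, $T_2$ and $K$ are evaluated at the midpoint (and that positivity of the temperatures there is needed) is a point of precision the paper's proof leaves implicit, but the argument is the same.
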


\begin{proof}
	If $H$ is a quadratic function, the it is not difficult to prove that
	\begin{equation*}
	\nabla_{2}H(x_{k},x_{k+1})=dH\left(\frac{x_{k}+x_{k+1}}{2}\right).
	\end{equation*}
	Moreover, observe that for $i=1,2$
	\begin{equation*}
	S_{i}^{k+1}-S_{i}^{k}=h\Lambda_{K}\left(\nabla_{2}H(x_{k},x_{k+1}),dS_{i}\right).
	\end{equation*}
	Then using the hypothesis that $H$ is a quadratic function, the previous equation becomes
	\begin{equation*}
	S_{i}^{k+1}-S_{i}^{k}=h\Lambda_{K}\left(dH\left(\frac{x_{k}+x_{k+1}}{2}\right),dS_{i}\right)=\pm hK\frac{\partial H}{\partial S_{j}},
	\end{equation*}
	where $j=1,2$ and $j\neq i$. So,
	\begin{equation*}
	\begin{split}
	S^{k+1}-S^{k} & =S_{1}^{k+1}-S_{1}^{k}+S_{2}^{k+1}-S_{2}^{k}\\
	& =hK(-T_{1}+T_{2}) \\
	& =hk\frac{(T_{2}-T_{1})^2}{T_{1}T_{2}}\geqslant 0.
	\end{split}
	\end{equation*}
\end{proof}

Thus, when the Hamiltonian function is a quadratic function, we have a geometric integrator satisfying the first and second laws of Thermodynamics.

\begin{example}
	In the thermo-particle example, described by the Hamiltonian function
	$$H(S_{a},S_{b})=c_{a}e^{\frac{S_{a}}{c_{a}}}+c_{b}e^{\frac{S_{b}}{c_{b}}}$$
	the energy is constant by definition of the discrete gradient function and the skew-symmetry of the structure $\Lambda_K$. Moreover, the total entropy is strictly increasing as it is shown in Figure \ref{Total_entropy_particle_DG} and the temperatures converge to the same value (see Figure \ref{Temperature_particle_DG}).
	
	\begin{figure}[htb!]
		\centering
		\includegraphics[scale=0.4]{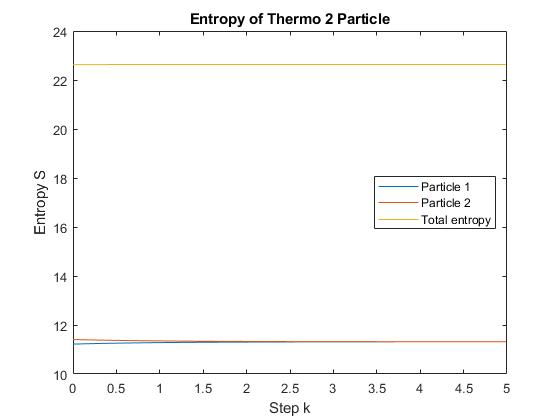}
		\caption{Total entropy of the two thermal particle system. We used $k=1$, $T_1=273.15$, $T_2=300$, $h=0.1$ over $500$ steps.}
		\label{Total_entropy_particle_DG}
	\end{figure}
	
	\begin{figure}[htb!]
		\begin{center}
			\includegraphics[scale=0.4]{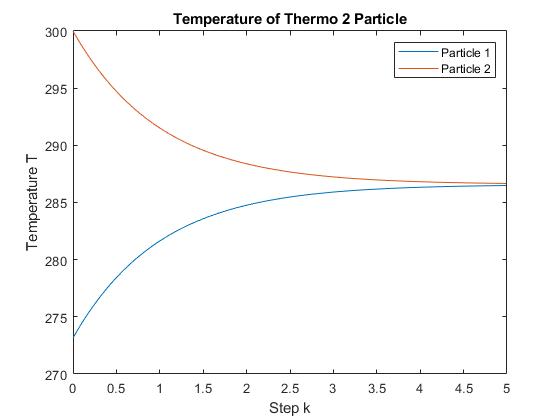}
			\caption{The temperature of each thermal particle in the system. We used $k=1$, $T_1(0)=273.15$, $T_2(0)=300$, $h=0.1$ over $500$ steps.}
			\label{Temperature_particle_DG}
		\end{center}
	\end{figure}
\end{example}

\subsection{``Variational integration'' of the evolution vector field}

Now, we propose   to construct a numerical integrator for ${\mathcal E}_L$  based on a similar method to the discrete Herglotz principle~\cite{AMLL, VBS}. 

Let $L_{d}:Q\times Q\times \R\rightarrow \R$ be a discrete Lagrangian function. Then a possible  integrator for the evolution dynamics is
\begin{equation}\label{DHE}
D_{1}L_{d}(q_{1},q_{2},S_{1})+(1+D_{S}L_{d}((q_{1},q_{2},S_{1}))D_{2}L_{d}(q_{0},q_{1},S_{0})=0
\end{equation}
and the entropy is subjected to
\begin{equation}\label{Herglotz:entropy}
S_{1}-S_{0}=(q_{1}-q_{0})D_{2}L_{d}(q_{0},q_{1},S_{0}).
\end{equation}

\begin{example}
	Consider again the Hamiltonian function \eqref{harmonic:osc} of the damped harmonic oscillator. Since $H$ is regular, we may consider the corresponding Lagrangian function $L:TQ\times \R\rightarrow \R$ given by
	\begin{equation*}
	L(q,\dot{q},S)=\frac{\dot{q}^{2}}{2}-\frac{q^{2}}{2}-\gamma S.
	\end{equation*}
	A standard discretization of this Lagrangian function is given by means of a quadrature rule like
	\begin{equation*}
	L_{d}(q_{0},q_{1},S_{0})=\frac{(q_{1}-q_{0})^{2}}{2h}-h\frac{(q_{1}+q_{0})^{2}}{8}-h \gamma S_{0}.
	\end{equation*}
	The discrete Herglotz equations \eqref{DHE} together with \eqref{Herglotz:entropy} give the explicit integrator
	\begin{equation}\label{Herglotz:harmonic}
	\begin{split}
	& q_{2} = \frac{\gamma h^3 q_{0}+\gamma h^3 q_{1}+4 \gamma h q_{0}-4 \gamma h q_{1}-h^2 q_{0}-2 h^2 q_{1}-4 q_{0}+8 q_{1}}{h^2+4} \\
	& S_{1} = S_{0} + \frac{(q_{1}-q_{0})^2}{h}-h \frac{q_{1}^{2}-q_{0}^{2}}{4}.
	\end{split}
	\end{equation}
	
	In Figures \ref{fig:test3} we plot the integrator given by equations \eqref{Herglotz:harmonic}. We see that the qualitative behaviour of the integrator is also quite good. In fact, an open question is whether the error can be improved by considering discrete Lagrangian functions approximating well enough the exact discrete Lagrangian function.
	
	As a last comment, the entropy for equations \eqref{Herglotz:harmonic} is increasing and the Hamiltonian oscillates before stabilizing around a constant value (cf. Fig \ref{fig:test5}).

	\begin{figure}[htb!]
		\centering
		\includegraphics[width=0.7\linewidth]{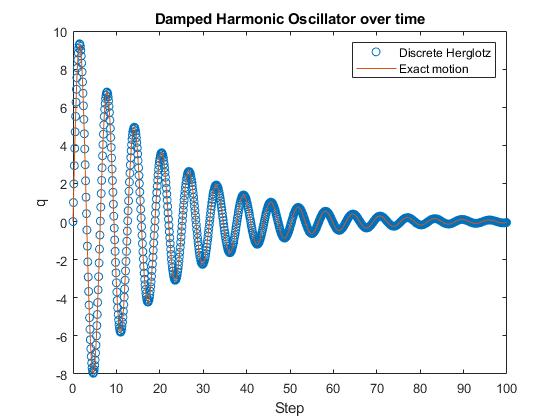}
		\caption{Trajectory of \eqref{Herglotz:harmonic}: the initial data are $q_{0}=0$, $q_{1}=1$ and $S_{0}=0$; the step is $h=0.1$ and $\gamma=0.1$. We plot the positions $q_{k}$ and compare the integrator with the integral curve of the evolution dynamics ${\Gamma}_{L}$.}
		\label{fig:test3}
	\end{figure}
	
	\begin{figure}[htb!]
		\centering
		\includegraphics[width=0.7\linewidth]{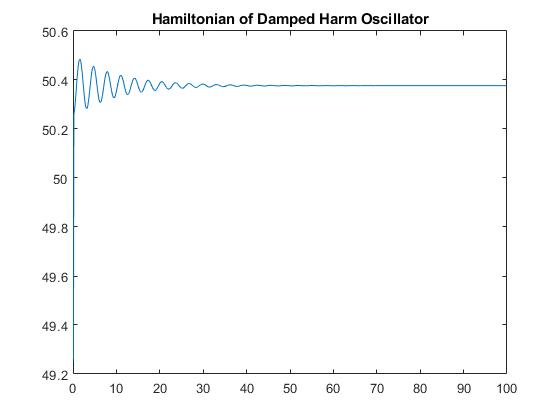}
		\caption{Hamiltonian of \eqref{Herglotz:harmonic}: using the same initial data and settings from Figure \ref{fig:test3}, we plot the Hamiltonian function along the iterations of the integrator.}
		\label{fig:test5}
	\end{figure}
	
\end{example}

\section{Conclusions and future work}\label{sec7}




In this paper, we have given a variational interpretation of the evolution vector field and clarified its relationship with the laws of thermodynamics on isolated systems. Furthermore, we have extended our theory to deal with composed thermodynamic systems without friction and we also provided geometric integrators for this framework.

Nevertheless, there are still many open questions. Related to simple systems, we would like to see if a similar formalism holds for open systems, in which the number of particles and the chemical potentials have to be taken into account, and for closed non-isolated systems, in which the energy is not necessarily constant, such as the ones experimenting isobaric or isothermal processes.

In addition, Hamiltonians of the form~\eqref{hami} where $g_{ij}$ is a Lorentzian metric could have interesting applications to relativity, in particular, in black hole thermodynamics. Though it is true that the second law of thermodynamics will not hold for all integral curves of the evolution vector field associated with a semi-Riemannian metric, it does hold for time-like curves which are the ones that describe the allowed trajectories for matter.

In what concerns multi-component systems, there is still much work to do. Indeed, a Lagrangian formulation and a nonholonomic constraint of the type $\delta \mathcal{Q} = 0$ could be used to introduce a ``variational principle'' from which we derive the integral curves of the evolution vector field $\mathcal{E}_{H,K}$. Also, one could introduce a similar formulation in order to account for friction. It could be interesting to formulate it in such a way one could model thermo-visco-elastic systems with it (cf, \cite{upm39399}). Finally, it is reasonable to think that our formalism could be generalized without much effort to the case of $N$ subsystems exchanging heat with each other. In order to accomplish this, we must consider a skew-symmetric tensor $\Lambda_K$, encoding in component function $K_{ij}$ the heat flux interchanged by particles $i$ and $j$, with $i\neq j \in \{1,...,N\}$.

\section*{Acknowledgements}
The authors acknowledge financial support from the Spanish Ministry of Science and Innovation, under grants PID2019-106715GB-C21, MTM2016-76702-P, ``Severo Ochoa Programme for Centres of Excellence in R\&D'' (CEX2019-000904-S) and from the Spanish National Research Council, through the ``Ayuda extraordinaria a Centros de Excelencia Severo Ochoa'' (20205\-CEX001). A. Simoes is supported by the FCT (Portugal) research fellowship SFRH/BD/129882/2017.

\bibliography{Thermo_Discrete}

\begin{thebibliography}{10}

\bibitem{AMLL}
Alexandre Anahory~Simoes, Manuel {de Le\'on}, Manuel Lainz, and David
  Martín~de Diego.
\newblock On the geometry of discrete contact mechanics. arxiv:2003.11892
  [math.ph].
\newblock 2020.

\bibitem{termo1}
Alexandre Anahory~Simoes, Manuel {de Le\'on}, Manuel {Lainz Valcázar}, and
  David~Martín de~Diego.
\newblock Contact geometry for simple thermodynamical systems with friction.
\newblock {\em Proceedings of the Royal Society A: Mathematical, Physical and
  Engineering Sciences}, 476(2241):20200244, 2020.

\bibitem{Arn1990}
VI~Arnold.
\newblock Contact geometry: the geometrical method of gibbs’s thermodynamics.
\newblock In {\em Proceedings of the Gibbs Symposium (New Haven, CT, 1989)},
  pages 163--179, 1990.

\bibitem{BV2001}
Roger Balian and Patrick Valentin.
\newblock Hamiltonian structure of thermodynamics with gauge.
\newblock {\em Eur. Phys. J. B Condens. Matter Phys.}, 21(2):269--282, 2001.

\bibitem{Bravetti2017}
Alessandro Bravetti.
\newblock Contact {{Hamiltonian Dynamics}}: {{The Concept}} and {{Its Use}}.
\newblock {\em Entropy}, 19(12):535, October 2017.

\bibitem{Bravetti2018}
Alessandro Bravetti.
\newblock Contact geometry and thermodynamics.
\newblock {\em Int. J. Geom. Methods Mod. Phys.}, 16(supp01):1940003, October
  2018.

\bibitem{BLMP}
Alessandro Bravetti, Manuel {de Le\'on}, Juan~Carlos Marrero, and Edith
  Padr\'on.
\newblock {Invariant measures for contact Hamiltonian systems: symplectic
  sandwiches with contact bread}.
\newblock {\em J. Phys. A}, 53(45):455205, 2020.

\bibitem{CILM2004}
Hern\'{a}n Cendra, Alberto Ibort, Manuel de~Le\'{o}n, and David Mart\'{\i}n~de
  Diego.
\newblock A generalization of {C}hetaev's principle for a class of higher order
  nonholonomic constraints.
\newblock {\em J. Math. Phys.}, 45(7):2785--2801, 2004.

\bibitem{upm39399}
Sergio Conde~Mart{\'i}n.
\newblock {\em Energy-entropy-momentum time integration methods for coupled
  smooth dissipative problems}.
\newblock PhD thesis, ETSI Caminos, Canales y Puertos, UPM, 2016.

\bibitem{MR1091922}
Pierre Dazord, Andr\'{e} Lichnerowicz, and Charles-Michel Marle.
\newblock Structure locale des vari\'{e}t\'{e}s de {J}acobi.
\newblock {\em J. Math. Pures Appl. (9)}, 70(1):101--152, 1991.

\bibitem{deLeon2018}
Manuel de~Le\'{o}n and Manuel Lainz~Valc\'{a}zar.
\newblock Contact {H}amiltonian systems.
\newblock {\em J. Math. Phys.}, 60(10):102902, 18, 2019.

\bibitem{deLeon2019}
Manuel {de Le\'on} and Manuel {Lainz Valcázar}.
\newblock Singular lagrangians and precontact hamiltonian systems.
\newblock {\em International Journal of Geometric Methods in Modern Physics},
  16(10):1950158, 2019.

\bibitem{deLeon2020}
Manuel {de Le\'on} and Manuel {Lainz Valcázar}.
\newblock Infinitesimal symmetries in contact hamiltonian systems.
\newblock {\em Journal of Geometry and Physics}, 153:103651, 2020.

\bibitem{deLeon1987}
Manuel {de Le{\'o}n} and Paulo R.~Rodrigues.
\newblock {\em Methods of Differential Geometry in Analytical Mechanics},
  volume 158.
\newblock {Elsevier}, {Amsterdam}, 1987.

\bibitem{deLeon2017}
Manuel {de Le{\'o}n} and C.~Sard{\'o}n.
\newblock Cosymplectic and contact structures for time-dependent and
  dissipative {{Hamiltonian}} systems.
\newblock {\em J. Phys. A: Math. Theor.}, 50(25):255205, 2017.

\bibitem{Ed-Ber}
Brian.~J. Edwards and Antony~N. Beris.
\newblock Noncanonical {P}oisson bracket for nonlinear elasticity with
  extensions to viscoelasticity.
\newblock {\em J. Phys. A}, 24(11):2461--2480, 1991.

\bibitem{Ed-Ber-2}
Brian.~J. Edwards and Antony~N. Beris.
\newblock Noncanonical {P}oisson bracket for nonlinear elasticity with
  extensions to viscoelasticity.
\newblock {\em J. Phys. A}, 24(11):2461--2480, 1991.

\bibitem{ignacio}
Juan~Carlos Garc\'{\i}a~Orden and Ignacio Romero.
\newblock Energy-entropy-momentum integration of discrete thermo-visco-elastic
  dynamics.
\newblock {\em Eur. J. Mech. A Solids}, 32:76--87, 2012.

\bibitem{Gay-Balmaz2017}
Fran{\c c}ois {Gay-Balmaz} and Hiroaki Yoshimura.
\newblock A {{Lagrangian}} variational formulation for nonequilibrium
  thermodynamics. {{Part I}}: {{Discrete}} systems.
\newblock {\em Journal of Geometry and Physics}, 111:169--193, January 2017.

\bibitem{Gay-Balmaz2019}
Fran{\c c}ois {Gay-Balmaz} and Hiroaki Yoshimura.
\newblock From {{Lagrangian Mechanics}} to {{Nonequilibrium Thermodynamics}}:
  {{A Variational Perspective}}.
\newblock {\em Entropy}, 21(1):8, January 2019.

\bibitem{Godbillon1969}
Claude Godbillon.
\newblock {\em {G{\'e}om{\'e}trie diff{\'e}rentielle et m{\'e}canique
  analytique}}.
\newblock {Hermann}, {Paris}, 1969.
\newblock OCLC: 1038025757.

\bibitem{GONZ}
\'Oscar. Gonz\'alez.
\newblock Time integration and discrete {H}amiltonian systems.
\newblock {\em J. Nonlinear Sci.}, 6(5):449--467, 1996.

\bibitem{Hermann}
Robert Hermann.
\newblock {\em Linear and tensor algebra}.
\newblock Robert Hermann, Mathematics Department, Rutgers University, New
  Brunswick, N.J., 1973.
\newblock Interdisciplinary Mathematics. II (Algebra, with applications to
  physics and systems theory, Part II).

\bibitem{ITOH}
Toshiaki Itoh and Kanji Abe.
\newblock Hamiltonian-conserving discrete canonical equations based on
  variational difference quotients.
\newblock {\em J. Comput. Phys.}, 76(1):85--102, 1988.

\bibitem{Kaufman}
Allan~N. Kaufman.
\newblock Dissipative {H}amiltonian systems: a unifying principle.
\newblock {\em Phys. Lett. A}, 100(8):419--422, 1984.

\bibitem{marle}
Paulette Libermann and Charles-Michel Marle.
\newblock {\em Symplectic geometry and analytical mechanics}, volume~35 of {\em
  Mathematics and its Applications}.
\newblock D. Reidel Publishing Co., Dordrecht, 1987.
\newblock Translated from the French by Bertram Eugene Schwarzbach.

\bibitem{MR1694701}
Robert~I. McLachlan, G.~R.~W. Quispel, and Nicolas Robidoux.
\newblock Geometric integration using discrete gradients.
\newblock {\em R. Soc. Lond. Philos. Trans. Ser. A Math. Phys. Eng. Sci.},
  357(1754):1021--1045, 1999.

\bibitem{mielke}
Alexander Mielke.
\newblock Formulation of thermoelastic dissipative material behavior using
  {GENERIC}.
\newblock {\em Contin. Mech. Thermodyn.}, 23(3):233--256, 2011.

\bibitem{morrison}
Philip~J. Morrison.
\newblock A paradigm for joined {H}amiltonian and dissipative systems.
\newblock volume~18, pages 410--419. 1986.
\newblock Solitons and coherent structures (Santa Barbara, Calif., 1985).

\bibitem{Mruga}
Ryszard Mrugala.
\newblock Continuous contact transformations in thermodynamics.
\newblock In {\em Proceedings of the {XXV} {S}ymposium on {M}athematical
  {P}hysics ({T}oru\'{n}, 1992)}, volume~33, pages 149--154, 1993.

\bibitem{Mrugala1991}
Ryszard Mrugala, James~D. Nulton, J.~Christian~Sch{\"o}n, and Peter Salamon.
\newblock Contact structure in thermodynamic theory.
\newblock {\em Reports on Mathematical Physics}, 29(1):109--121, February 1991.

\bibitem{PGR2017}
David Portillo, Juan~Carlos Garc\'{\i}a~Orden, and Ignacio Romero.
\newblock Energy-entropy-momentum integration schemes for general discrete
  non-smooth dissipative problems in thermomechanics.
\newblock {\em Internat. J. Numer. Methods Engrg.}, 112(7):776--802, 2017.

\bibitem{Romero2010}
Ignacio Romero.
\newblock Algorithms for coupled problems that preserve symmetries and the laws
  of thermodynamics {P}art {I}: {M}onolithic integrators and their application
  to finite strain thermoelasticity.
\newblock {\em Comput. Methods Appl. Mech. Engrg.}, 199(25-28):1841--1858,
  2010.

\bibitem{VdSM2018}
Arjan Van~der Schaft and Bernhard Maschke.
\newblock Geometry of thermodynamic processes.
\newblock {\em Entropy}, 20(12):925, 2018.

\bibitem{VBS}
Mats Vermeeren, Alessandro Bravetti, and Marcello Seri.
\newblock Contact variational integrators.
\newblock {\em J. Phys. A}, 52(44):445206, 28, 2019.

\end{thebibliography}
\bibliographystyle{plain}
\end{document}